\renewcommand{\fps@figure}{htp}
\renewcommand{\fps@table}{htp}
\newtheorem{theorem}{Theorem}
\begin{document}
%
\title{A Spatial-Spectral Interference Model for Dense Finite-Area 5G mmWave Networks}
\author{Solmaz~Niknam,~\IEEEmembership{Student Member,~IEEE,}
        Balasubramaniam~Natarajan,~\IEEEmembership{Senior Member,~IEEE,}
        and~Reza~Barazideh,~\IEEEmembership{Student Member,~IEEE.}

\thanks{Authors are with the department of ECE, Kansas state university, Manhattan, KS (emails: \{slmzniknam, bala, rezabarazideh\}@ksu.edu).} 
}

%

\maketitle

\begin{abstract}
With the overcrowded sub-6 GHz bands, millimeter wave (mmWave) bands offer a promising alternative for the next generation wireless standard, i.e., 5G. However, the susceptibility of mmWave signals to severe pathloss and shadowing requires the use of highly directional antennas to overcome such adverse characteristics. Building a network with directional beams changes the interference behavior, since, narrow beams are vulnerable to blockages. Such sensitivity to blockages causes uncertainty in the active interfering node locations. Configuration uncertainty may also manifest in the spectral domain while applying dynamic channel and frequency assignment to support 5G applications. In this paper, we first propose a blockage model considering mmWave specifications. Subsequently, using the proposed blockage model, we derive a spatial-spectral interference model for dense finite-area 5G mmWave networks. The proposed interference model considers both spatial and spectral randomness in node configuration. Finally, the error performance of the network from an arbitrarily located user perspective is calculated in terms of bit error rate (BER) and outage probability metrics. The analytical results are validated via Monte-Carlo simulations. It is shown that considering mmWave specifications and also randomness in both spectral and spatial node configurations leads to a noticeably different interference profile.

\end{abstract}
\begin{IEEEkeywords}
Interference modeling, millimeter-wave band, blockage, 5G.
\end{IEEEkeywords}
\begin{figure*}[t]
\centering
\includegraphics[scale=0.5]{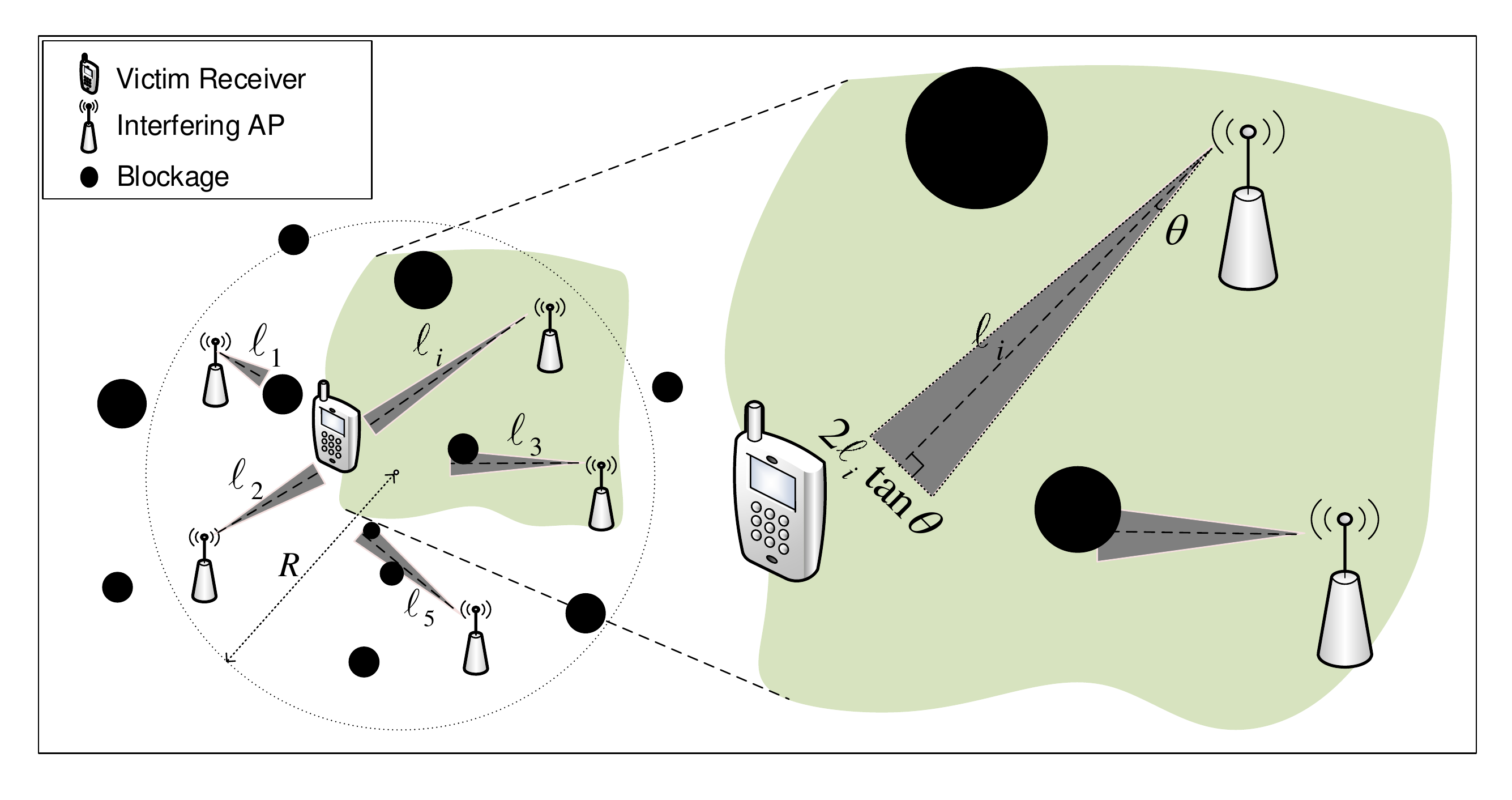}
\caption{The impact of interfering APs on the victim receiver in the presence of obstacles.}
\label{fig:blockage area}
\end{figure*}
\section{Introduction} \label{sec:intro}
Triggered by the popularity of smart devices, wireless traffic volume and device connectivity have been growing exponentially during recent years~\cite{Nokia2015ten}. Next generation of wireless networks, i.e., 5G, is a promising solution to satisfy the increasing data demand through combination of key enabling technologies such as ultra-densification (deployment of high density of access points (APs)) and utilization of large amount of bandwidth in millimeter wave (mmWave) bands. However, mmWave signals suffer from severe pathloss and strong gaseous attenuation due to oxygen molecules, water vapor, and rain drops in the atmosphere. Therefore, this section of spectrum has been under-utilized. However, having large antenna arrays that coherently direct the beam energy will help overcome the hostile characteristics of mmWave channels. Utilization of the highly directional beams changes many aspects of the wireless system design. One of the main factor that is highly impacted is the interference behavior.
In fact, directional links are highly susceptible to obstacles~\cite{MacCartneyJr2016} and interference in such networks tends to demonstrate an on-off pattern as a result of the movement of the nodes~\cite{Andrew2014what}. In addition to the node mobility, the unplanned user-installed APs which causes uncertainty in spatial network configuration influences the interference models as interference power is highly dependent on the relative locations of the transmitters and receivers. Therefore, it is important to consider random spatial models of the network configuration to accurately model the interference~\cite{Andrews2010spatial}. In addition to the spatial distance between nodes, the distance between their allocated frequencies also affects the amount of accumulated interference in a multi-user environment due to possible partial band overlap and out-of-band radiation. Therefore, uncertainty in network configuration may be observed in the spectral domain impacting the interference model. Moreover, with the notions of adaptive frequency selection and dynamic channel allocation strategies instead of static assignments~\cite{Ericsson2016Radio}, considering the uncertainty in spectral domain while modeling the accumulated interference provides a more accurate model. A precise interference model in turn impacts the design of interference coordination and management schemes.

\subsection{Related work}
There have been prior efforts on interference modeling considering random node distributions. Due to its analytical tractability, Poisson point process (PPP) is one of the popular random model assumed for node distribution~\cite{Haenggi2012stochBook}.
A 2D-PPP has been suggested in~\cite{Hamdi2009unified} in order to consider the presence of interferers in both spectral and spatial domains in the interference model. However, an infinite-sized network area and an unlimited frequency
band of operation are assumed in order to simplify the calculations. In addition, the effect of blockages and mmWave specifications such as high signal attenuation and beam directivity and are not taken into account. Therefore, it may not be applicable to 5G mmWave networks. Assuming a PPP model, an interference model for finite-sized highly dense mmWave networks has been proposed in~\cite{Heath2015wearable}. However, uncertainty in the configuration is only considered in the spatial domain. Moreover, Binomial point process (BPP) is a more appropriate choice for modeling finite-sized networks with a given number of APs~\cite{Haenggi2012stochBook}. Authors in~\cite{Dong2016MobilityAware} have recommended a mobility-aware uplink interference model for 5G networks with Binomial node distribution. However, only the interference from macro users to the small cell users is considered due to their higher power levels. Such an assumption may not be appropriate in a dense environment where the interference levels from individual interferers become less distinguishable as they are located in close proximity. In fact, in dense networks, the variance of interference levels from individual users decays with the node density~\cite{Junyu2017Dense}.
In addition, mmWave specifications and hence the sensitivity of the beams to the blockages in the environment is not taken into account in~\cite{Dong2016MobilityAware}. There have been several prior works that model the effect of blockages~\cite{Bai2014Blockage,gupta2017macro,muller2017analyzing,Venugopal2016blockagefinite,thornburg2016performance}. However, in order to make the calculations tractable, \cite{Bai2014Blockage,gupta2017macro,muller2017analyzing,Venugopal2016blockagefinite,thornburg2016performance} assume that the presence of one obstacle in the path between the transmitters and the receivers completely blocks the line-of-sight (LoS) link. Such an assumption may be justifiable in case of relatively long-distance links. However, in many practical applications such as indoor environments, outdoor small cells where coverage range is limited or even cases where terminals are equipped with larger number of antennas with wider beamwidths, more than one obstacle is needed to impact the power level, causing link blockage. Authors in~\cite{Margarita2016analysis} have suggested a blockage model, built with tools from stochastic geometry and renewal processes, for mmWave cellular communications considering the receiver dimension. However, the model in~\cite{Margarita2016analysis} is based on an unrealistic assumption of very large-sized receiver (i.e., receiver dimension~$ \to \infty $).

\subsection{Contributions}
In this paper, we propose a spatial-spectral interference model for dense finite area 5G mmWave networks while considering realistic blockage effects. The important features of the proposed model can be summarized as follows:

\begin{itemize}
\item In order to capture the effect of blockages properly, we first propose a blockage model that calculates the effective number of the obstacles that results in complete link blockage. Then, we calculate the probability of the presence of that number of obstacles in the path from the interfering APs to the receiver which is the probability of the complete blockage of the interference link. Therefore, unlike prior efforts that consider \emph{binary blockage effect}, here we consider the partial blockage effect of every single obstacle that intersects (partially or completely, depending on the blockage size) the signal beamwidth.

\item Binary effect of the blockages is the resultant of the assumption of $0^{\circ}$ signal beamwidth which is not realistic. In fact, considering such an assumption, a single obstacle that occurs within a given area (depends on the size of the obstacles) close to the LoS link (assumed to be with $0^{\circ}$ beamwidth) causes complete link blockage. However, considering non-zero signal beamwidth, there might be cases where a single obstacle creates partial signal blockage. Therefore, we consider non-zero signal beamwidth in order to model the effect of blockages and the interference behavior.

\item Using the proposed blockage model, we derive the moment generating function (MGF) of the aggregated interference power in a finite-sized mmWave network while considering the configuration uncertainty of the nodes in the spectral domain as well as in spatial domain. Unlike prior works that consider configuration randomness only in spatial domain, we derive the spectral distance distribution and include the configuration randomness in the spectral domain, as well.
\end{itemize}
Subsequently, using the proposed interference model, we evaluate the performance and reliability of the desired communication link based on the average bit error rate (BER) and signal outage probability metrics. It is important to note that average BER and the outage probabilities are the two metrics that can be used in order to evaluate the network performance in fast and slow varying spatial-spectral configuration scenario, respectively. In fact, when the spatial-spectral configuration changes rapidly, it is meaningful to average the interference analysis over all possible realization of the spatial-spectral configurations motivating the use of the average BER metric. However, in the slow varying scenario, the configuration changes slowly and it is more reasonable to calculate the interference for the given configuration and utilize the outage probability metric.

\subsection{Organization}
The remainder of the paper is organized as follows. Section~\ref{sec:sys_model} and~\ref{sec:block_mdl} describe the system model and the distribution of the number of APs considering the effect of the blockages, respectively. In Section~\ref{sec:Intf_statistics}, we calculate the interference statistics and quantify the network performance based on BER and signal outage probability metrics. Finally, Section~\ref{sec:Simulation} highlights the numerical results and its validation using Monte-Carlo based system simulation.
\
\subsection{Notations}
${\mathbb{E}} [.]$ and $\text{Pr} (.)$ denote the expected value and probability measure of the argument, respectively. $F_X (.)$, $M_X (.)$ and $G_X (.)$ are used for representing the cumulative density, moment and probability generating functions of random variable $x$, respectively. $\text{erf}(.)$ represents the error function. $\left| . \right|$ and $\left\| . \right\|$ are the absolute value and $\ell^2$-norm operators, respectively. ${}_1{F_1}\left( {a;b;c} \right)$ denotes the confluent hypergeometric function of the first kind. $\Gamma (.)$ represents the gamma function. Finally, $\text{min}(.)$ and $\text{max}(.)$ are used to denote the minimum and maximum values of the argument, respectively.
\vspace{0.5cm}
\section{System Model} \label{sec:sys_model}
\begin{table}[t]
\caption {\textsc{Summary of System Model Notations}} \label{tab:Notation}
\centering
\begin{tabular}{c|p{6.5cm}}
\hline
\hline
$\textbf{Notation}$         & \textbf{Description}   \\ \hline
$R$         & Radius of the area   \\ \hline
$N$         & Total number of interfering APs in the area of interest  \\ \hline
$p$         & Success probability of the BPP model (model of interfering APs locations)  \\ \hline
$\rho$      & Parameter of the PPP model (model of blockages locations)  \\ \hline
$v_0$ (resp. $f_0$)        & Location (resp. frequency) of the reference receiver   \\ \hline
$v_i$ (resp. $f_i$)        & Location (resp. frequency) of the $\text{i}^{th}$ interfering AP   \\ \hline
$f_s$ (resp. $f_e$)        & Minimum (resp. maximum) of the operational bandwidth  \\ \hline
$d_s$ (resp. $d_e$)        & Minimum (resp. maximum) of the radius of the blockages (modeled as circles)  \\ \hline
$W$                        & Bandwidth of the desired signal  \\ \hline
$2\theta$                   & Signal beamwidth   \\ \hline
\hline
\end{tabular}
\end{table}
Fig.~\ref{fig:blockage area} represents the system model of interest in the present work. Here, we consider a reference pair of transmitter-receiver communicating over a desired communication link in the presence of $N$ number of interfering APs in a circular area with radius $R$ and frequency range $[f_s,f_e]$. Interfering APs are distributed based on a BPP~\footnote{A popular model for finite-sized networks with a given number of nodes~\cite{haenggi2012stochastic}.} in the space-frequency domain with success probability $p$. In other words, we consider a grid structure where the total  $N$ interfering APs are randomly located at space-frequency locations based on Binomial point process~\footnote{Reference transmitter-receiver pair is not a part of the point process.}. The overall received interference signal is the sum of the received signal from each element at random space-frequency location. We also assume that the receiver is at an arbitrary location ${v_0} \in B(O;R) = \left\{ {\left. x \in {\rm I\!R}^2 \right|\,\,{\left\| x \right\|_2} < R} \right\}$ transmitting signals with an arbitrary frequency ${f_0} \in [f_s,f_e]$.
There are also random number of random-sized blockages in the environment. Similar to~\cite{venugopal2016millimeter,thornburg2016performance,muller2017analyzing,Bai2014Blockage}, we assume that blockages are PPP distributed with parameter $\rho$. For a quick reference, we provide the system model parameters in Table~\ref{tab:Notation}. Due to the presence of the arbitrary blockages in the environment, the transmitted signal of interfering APs may be blocked and not all of the interfering APs contribute to the total received interference signal. Therefore, we are primarily concerned with the interferers that are in the LoS link of the reference receiver. In Section \ref{sec:block_mdl}, we provide the proposed blockage model and calculate the probability of each interfering APs being blocked. Subsequently, using the derived probability of the blockage, we obtain the distribution of the number of active (non-blocked) interfering APs.

\section{Blockage Model} \label{sec:block_mdl}
In order to calculate the probability of ${i^{{\rm{th}}}} \in \left\{ {1,2,...,N} \right\}$ interfering AP being blocked, as shown in Fig.~\ref{fig:blockage area}, we consider a radiation cone, denoted by $C_i$, where the edges are determined by the beamwidth of the antenna ($2\theta$). There are random number of blockages, modeled as circles with uniformly distributed radius $d$ in $[d_s,d_e]$, within the path from the interfering APs to the reference receiver. It is important to note that blockages can be at any distance from interfering APs. Therefore, we assume that $r$ is uniformly distributed in $[0, \ell_i]$. Here, $\ell_i$ is a random variable that represents the distances from the $ i^ {\rm{th}}$ interfering AP to the reference receiver. Given the BPP assumption for the locations of the interfering APs, the distribution of $\ell_i$ is given
by\footnote{We drop the subscript $i$ for notational simplicity.}~\cite{Binomial2017Afshang}
\begin{align} \label{eq:distance_distribution}
{f_L}\left( \ell  \right) {=} \left\{ \begin{array}{l}
\frac{{2\ell }}{{{R^2}}}\,\,\,\,\,\,\,\,\,\,\,\,\,\,\,\,\,\,\,\,\,\,\,\,\,\,\,\,\,\,\,\,\,\,\,\,\,\,\,\,\,\,\,\,\,\,\,\,\,\,\,\,\,\,\,\,\,\,\,\,\,\,\,0 < \ell  \le R - \left\| {{v_0}} \right\|\\
\frac{{2\ell {{\cos }^{ - 1}}\left( {\frac{{{{\left\| {{v_0}} \right\|}^2} - {R^2} +{\ell ^2}}}{{2\ell \left\| {{v_0}} \right\|}}} \right)}}{{\pi {R^2}}}\,\,\,\,R - \left\| {{v_0}} \right\| < \ell  \le R + \left\| {{v_0}} \right\|
\end{array} \right.
\end{align}
To calculate the blockage probability, we divide the distance $r$ (distance from the interfering AP to the obstacles in its corresponding radiation cone) into two intervals, (1)~$r \le \frac{d}{{\tan (\theta )}}$ and (2)~$r >  \frac{d}{{\tan (\theta )}}$. In the former interval, only one obstacle blocks the radiation cone, while in the latter more than one blockage is needed to lose the LoS link from the interfering AP to the reference receiver. We calculate the blockage probabilities in both cases (denoted as ${{p_{{\rm{b}}1}}}$ and ${{p_{{\rm{b}}2}}}$, respectively) and average over all realizations of $r$.
Since in the first interval, only one obstacle blocks the entire radiation cone, the probability of blockage is the probability of at least one Poisson-distributed blockage located in the upper triangle in Fig.~\ref{fig:Shadow} which is given by
\begin{align}\label{eq:pb1_cond}
\left. {{p_{{\rm{b}}1}}} \right|d = 1 - {{\rm{e}}^{ - \rho \frac{{{d^2}}}{{\tan (\theta )}}}},
\end{align}
and given the uniform distribution of $d$, we have
\begin{align}\label{eq:pb1} \notag
{p_{{\rm{b}}1}}&= \int\limits_{{d_s}}^{{d_e}} {\Big(1 - {{\rm{e}}^{ - \rho \frac{{{d^2}}}{{\tan (\theta )}}}}\Big)\frac{1}{{{d_e} - {d_s}}}{\rm{d}}} d\\ \notag
&=1 - \frac{{\sqrt {\frac{{\pi \tan (\theta )}}{\rho }} }}{{2\left( {{d_e} - {d_s}} \right)}}\Bigg( {\rm{erf}}\Big( {{d_e}\sqrt {\frac{\rho }{{\tan (\theta )}}} } \Big) \\
&\hspace{4.5cm}- {\rm{erf}}\Big( {{d_s}\sqrt {\frac{\rho }{{\tan (\theta )}}} }\Big)  \Bigg).
\end{align}
In order to calculate the blockage probability in the second interval ($r >  \frac{d}{{\tan (\theta )}}$), we borrow the concepts from point process projection along with results from queuing theory. As shown in  Fig.~\ref{fig:Shadow}, by projecting blockages onto the base of the radiation cone, each blockage in the radiation cone causes a shadow (blocked interval) with length $S{=}\frac{2d\ell_i}{r}$ on the base. Based on~\cite{garcia2008spatial}, the point process obtained by the projection of the points of a PPP from a random subset of a higher dimension onto a lower dimension subspace forms a PPP. Therefore, the number of the shadows of the blockages on the base follows a PPP.
As shown in Fig.~\ref{fig:Shadow}, the resulting {blockages\textquotesingle} shadows (gray lines on the base of the radiation cone) may overlap with one another. We consider the overall overlapped shadow until the next upcoming non-blocked interval as a single resultant shadow (black lines on the base of the radiation cone) with length $S_{\text{res}}$. It is worth mentioning that the number of the resultant shadows is a thinned version of the original PPP obtained from the projection of the blockages in the radiation cone onto its base. In order to calculate the the density of the thinned PPP and also the resultant {shadows\textquotesingle} length, $S_{\text{res}}$, we model the overall projection process by an M/G/$\infty $ queuing
system~\footnote{M/G/$\infty$ is a queuing system with Poisson arrival of customers, infinite servers and general service time distribution.}, in which the initiation and the length of the shadows corresponds to the customer arrival (with poisson distribution) and their service times in the queue system, respectively. In fact, upon mapping the base of the radiation cone to the time duration $[0, 2\ell_i\rm{tan}(\theta)]$, we can model the initiation of the {blockages\textquotesingle} shadows (customers arrivals) as the Poisson point arrivals in that time duration. Furthermore, they are served immediately upon their arrival (infinite servers in the system) for a time that follows a general distribution (shadow lengths). The assumption of infinite number of servers accounts for the overlapped service times (overlapped shadows). By such correspondence, the interval $[0, 2\ell_i\rm{tan}(\theta)]$ consists of alternate busy (partial blocked segment) and idle (non-blocked segment) periods in the queue system. Therefore, the average length and number of the resultant blocked intervals (busy periods of the queue system), denoted as ${\mathbb{E}}[{S_{\text{res}}}]$ and $N_{{S_{\text{res}}}}$, is obtained via
\begin{align} \label{eq:eff_block_len}
{\mathbb{E}}[{S_{\text{res}}}] = \frac{{{{\rm{e}}^{\rho {\mathbb{E}}\left[ S \right]}} - 1}}{\rho },
\end{align}
and
\begin{align} \label{eq:eff_block_num}
{e^{-\rho {\mathbb{E}}\left[ S \right]}}\left( {1+ 2\rho {\ell _i}\tan \left( \theta  \right)} \right) \le {N_{{S_{\text{res}}}}}\le 1 + 2\rho {\ell _i}\tan \left( \theta  \right),
\end{align}
respectively~\cite{filipe2015infinite}, where ${\mathbb{E}}\left[ S \right]$ is given by
\begin{align} \label{eq:Ave_S} \notag
{\mathbb{E}}\left[ S \right] &= {\mathbb{E}}\left[ {\frac{{2d\ell }}{r}\left| {d,r,\ell } \right.} \right] \\
&= \int\limits_{{d_s}}^{{d_e}} {\int\limits_{\frac{d}{{\tan (\theta )}}}^{R + \left\| {{v_0}} \right\|} {\int\limits_{\frac{d}{{\tan (\theta )}}}^\ell  {\frac{{2d\ell }}{r}{\rm{ }}{f_D}\left( d \right)f\left( {r,\ell } \right){\rm{d}}d\,{\rm{d}}r\,{\rm{d}}\ell } } }.
\end{align}
Here,
\begin{align} \label{eq:r_dist}
f\left( {r,\ell } \right)&= {f_L}\left( \ell  \right)f\left( {\left. r \right|\ell } \right)= \left\{ \begin{array}{l}
{f_L}\left( \ell  \right)\frac{1}{\ell }\,\,\,\,\,\,\,\,\,\,\,0 \le r \le \ell \\
\,\,\,\,\,0\,\,\,\,\,\,\,\,\,\,\,\,\,\,\,\,\,\,\,\,\,\,\,{\rm{otherwise.}}
\end{array} \right.
\end{align}
Given the distribution of $\ell_i$ in \eqref{eq:distance_distribution} and the fact that both upper and lower bounds of ${N_{{S_{\text{res}}}}}$ are affine functions of $\ell_i$, we can apply Jensen\textquotesingle s inequality. Therefore, the average number of the the resultant shadows can be bounded by
\begin{align} \label{eq:eff_block_num_ave}
{{\rm{e}}^{-\rho {\mathbb{E}}\left[ S \right]}}\left( {1 + 2\rho {{\mathbb{E}}[\ell]}\tan \left( \theta  \right)} \right) \le {N_{{S_{\text{res}}}}} \le 1 + 2\rho {{\mathbb{E}}[\ell]}\tan \left( \theta  \right).
\end{align}
\begin{figure}[t]
\centering
\includegraphics[scale=0.5]{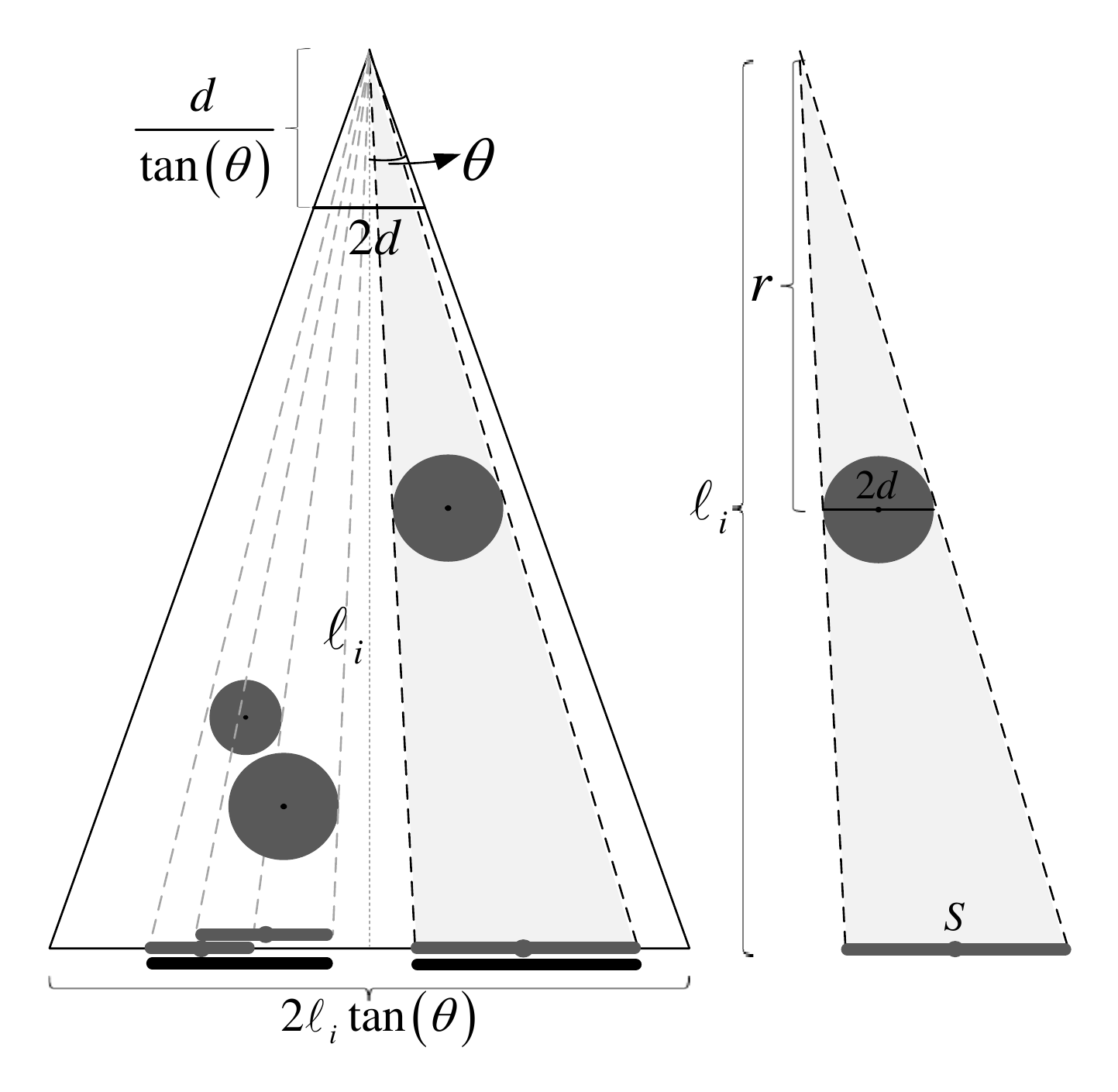}
\caption{Effective shadow of the blockages on the base of the radiation cone.}
\label{fig:Shadow}
\end{figure}

In order to have a complete blockage of the base with average length $2{{\mathbb{E}}[\ell ]}\tan \left( \theta  \right)$, $\Big\lceil {\frac{{2{{\mathbb{E}}[\ell ]}\tan \left( \theta  \right)}}{{\frac{{{{\rm{e}}^{\rho {\mathbb{E}}\left[ S \right]}} - 1}}{\rho }}}} \Big\rceil$ number of resultant shadows with average length ${\mathbb{E}}[{S_{\text{res}}}]$ are needed. It is worth reiterating that the resultant equivalent shadows do not overlap with one another. Therefore, following the fact that resultant shadows are Poisson distributed with density ${N_{{S_{\text{res}}}}}$, the probability of having $\Big\lceil {\frac{{2{{\mathbb{E}}[\ell ]}\tan \left( \theta  \right)}}{{\frac{{{{\rm{e}}^{\rho {\mathbb{E}}\left[ S \right]}} - 1}}{\rho }}}} \Big\rceil$ number of resultant shadows on the base of the radiation cone, i.e., the probability of each interfering AP being completely blocked is given by
\begin{align} \label{eq:pb2}
{p_{{\rm{b}}2}}  = \frac{{{N_{{S_{{\rm{eff}}}}}}^{\left\lceil {\frac{{2{\mathbb{E}}\left[ \ell  \right]\tan \left( \theta  \right)}}{{\frac{{{{\rm{e}}^{\rho {\mathbb{E}}\left[ S \right]}} - 1}}{\rho }}}} \right\rceil }{{\rm{e}}^{ - {N_{{S_{{\rm{eff}}}}}}}}}}{{\left\lceil {\frac{{2 {\mathbb{E}}\left[ \ell  \right]\tan \left( \theta  \right)}}{{\frac{{{{\rm{e}}^{\rho {\mathbb{E}}\left[ S \right]}} - 1}}{\rho }}}} \right\rceil !}}.
\end{align}
Consequently, the overall blockage probability of each interfering AP, in the average sense, is obtained by
\begin{align} \label{eq:blockage_prob} \notag
{p_{\rm{b}}} &= \Pr \left( {r \le \frac{{{\mathbb{E}}\left[ d \right]}}{{\tan (\theta )}}} \right){p_{{\rm{b1}}}} + \Pr \left( {r > \frac{{{\mathbb{E}}\left[ d \right]}}{{\tan (\theta )}}} \right){p_{{\rm{b2}}}}\\
&=\frac{1}{{\frac{{{\mathbb{E}}\left[ d \right]}}{{\tan \left( \theta  \right)}}}}{p_{{\rm{b1}}}} + \frac{1}{{{\mathbb{E}}\left[ \ell  \right] - \frac{{{\mathbb{E}}\left[ d \right]}}{{\tan \left( \theta  \right)}}}}{p_{{\rm{b2}}}}.
\end{align}
Given the upper and lower bounds of the number of resultant shadows in~\eqref{eq:eff_block_num_ave}, the blockage probability in~\eqref{eq:blockage_prob} is shown for different network parameters in Fig.~\ref{fig:pb_rho}, \ref{fig:pb_theta} and \ref{fig:pb_R}. As we can see, the gap between the upper and lower bounds is small. It is worth mentioning that, in this blockage model, we assume the receiver dimension spans the base of the radiation cone. As we can see in Fig.~\ref{fig:pb_rho}, the blockage probability increases with increasing density of the obstacles, as expected. By increasing the beamwidth, in one hand the chance of signals intersecting with more obstacles in the environment increases. On the other hand, since the beam is wider, even after partial blockage due to an obstacle, the signal can be partially received by the receiver. Both effects are captured in the proposed blockage model by considering the blockage density and the beamwidth. However, as shown in Fig.~\ref{fig:pb_theta}, the overall effect is in a way that the probability of blockage decreases as the beam gets wider. Moreover, by increasing the cell radius (which implies an increase in the average distance between the interfering APs and reference receiver) the probability of blockage increases. This is due to the fact that, considering the beamwidth of the signal confined to the receiver\textquotesingle s dimension and increasing the average distance, the signal beamwidth becomes narrower with respect to the {obstacles\textquotesingle} dimension. Therefore, the chance of getting blocked and loosing the LoS link increases, which is consistent with the 3GPP~\cite{GenerationPartnershipProject2010} and potential 5G models~\cite{gupta2017macro,muller2017analyzing,Venugopal2016blockagefinite,thornburg2016performance,Katsuyuki2016GPP} as well, where the probability of having LoS decreases exponentially as the length of the link increases.
\begin{figure}[t]
\centering
\includegraphics[width=8cm,height=6cm]{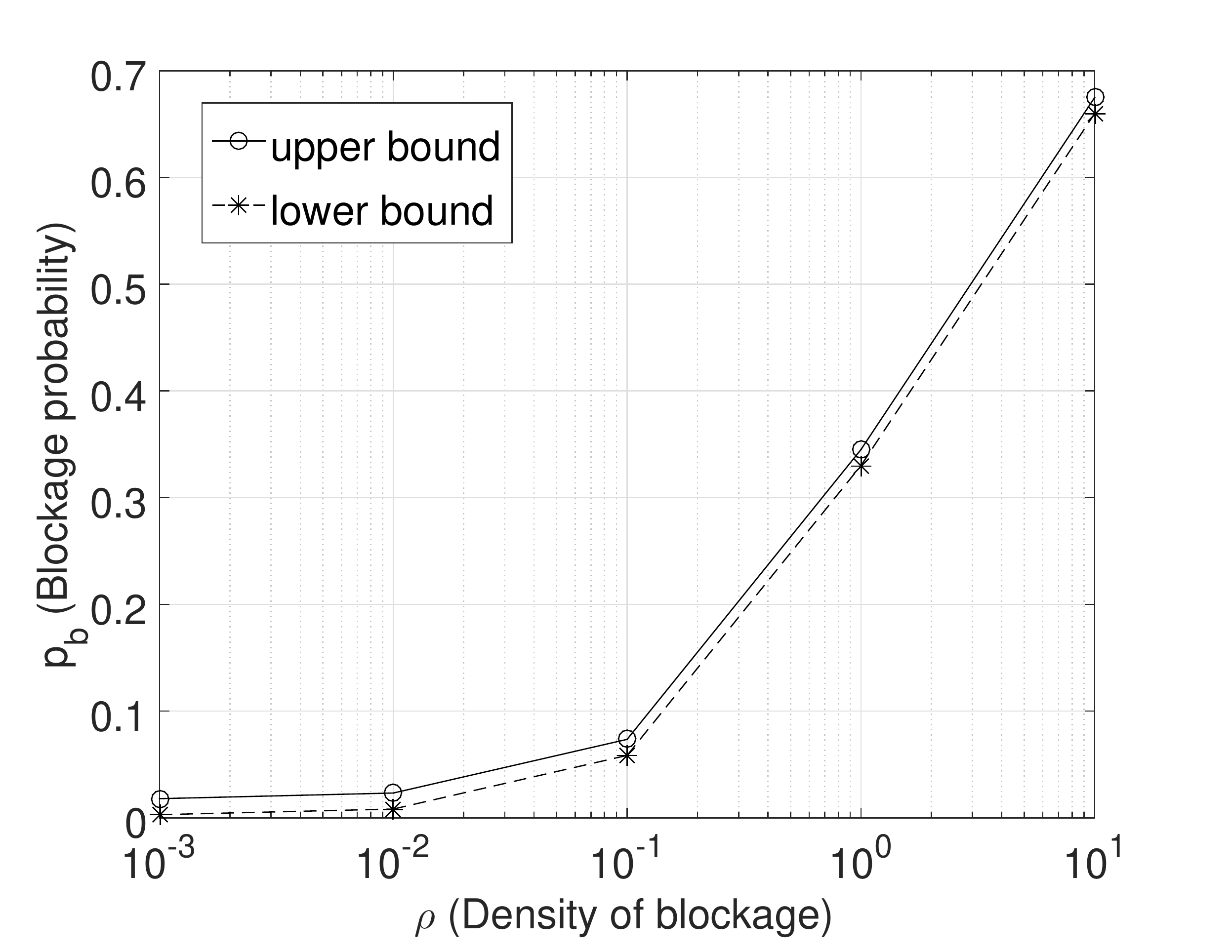}
\caption{Blockage probability vs. $\rho$ values. Here, $R=20$m, $v_0=10$m, $d_s=0.2$m, $d_e=0.8$m and $\theta=20^{\circ}$.}
\label{fig:pb_rho}
\end{figure}
\begin{figure}[t]
\centering
\includegraphics[width=8cm,height=6cm]{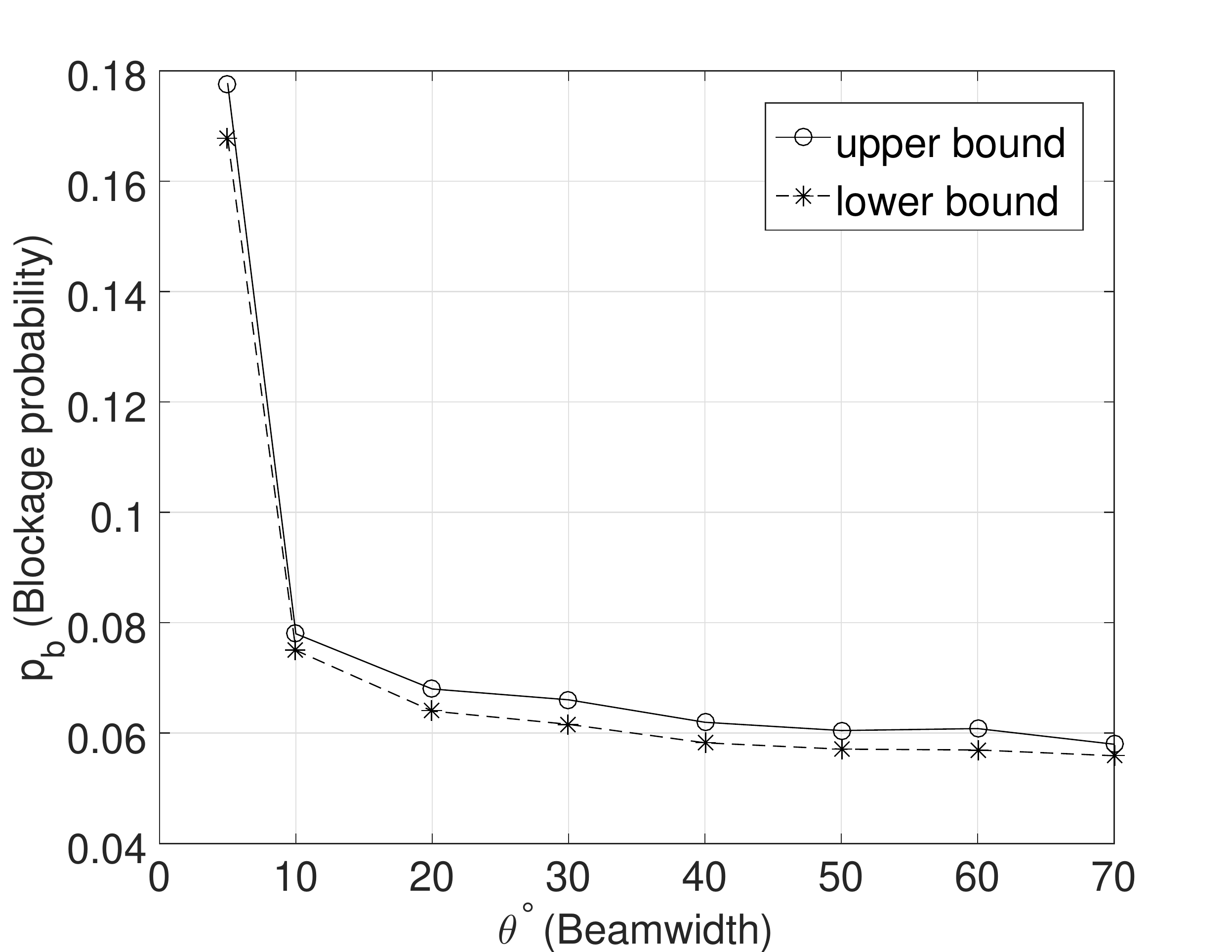}
\caption{Blockage probability vs. $\theta$ values. Here, $R=20$m, $v_0=10$m, $d_s=0.2$m, $d_e=0.8$m and $\rho=10^{-1}$.}
\label{fig:pb_theta}
\end{figure}
\begin{figure}[t]
\centering
\includegraphics[width=8cm,height=6cm]{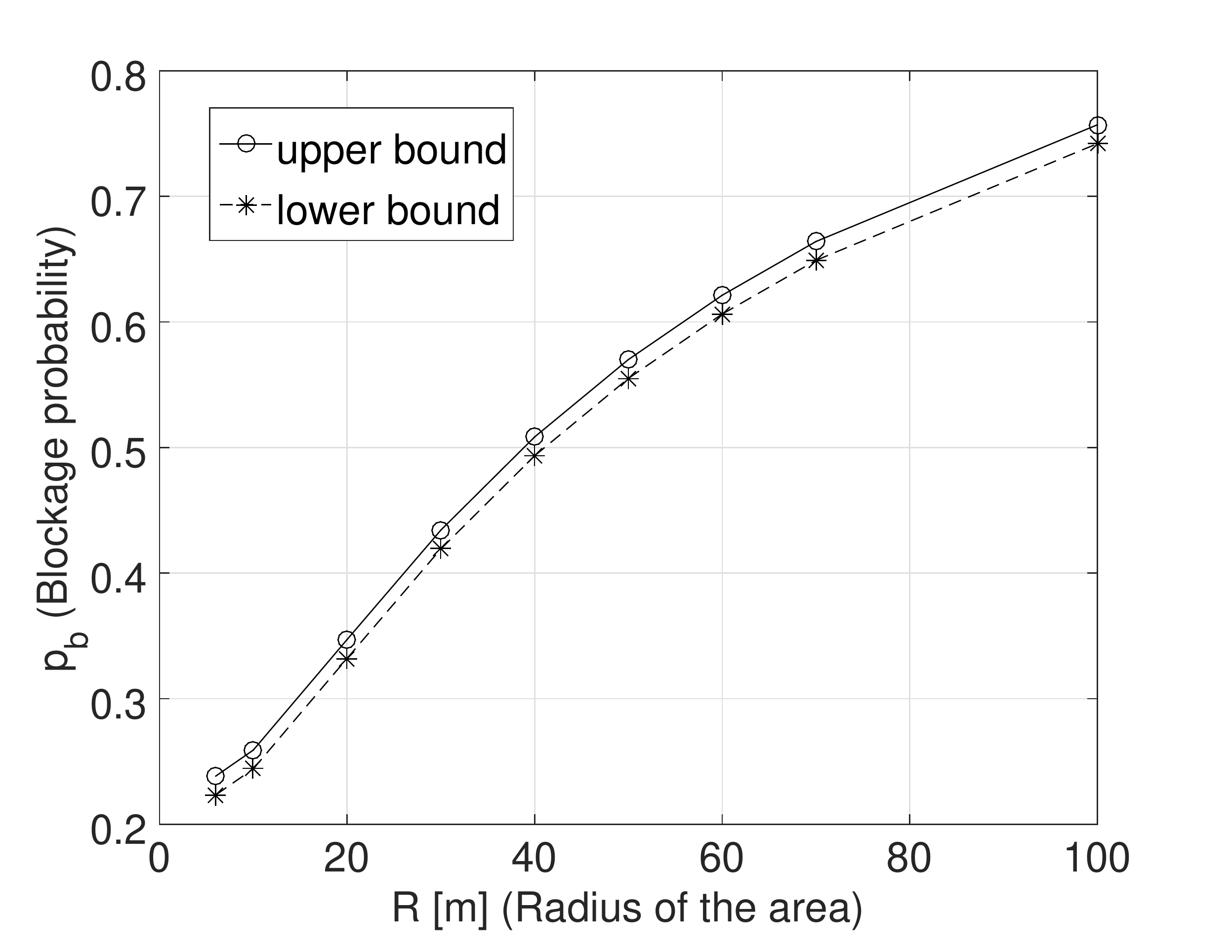}
\caption{Blockage probability vs. $R$ values. Here, $v_0=5$m, $d_s=0.2$m, $d_e=0.8$m and $\rho=10^{-2}$.}
\label{fig:pb_R}
\end{figure}

Considering the blockage probability in \eqref{eq:blockage_prob}, the distribution of the total number of non-blocked interfering APs is calculated using the following lemma:
\newtheorem{lemma}{Lemma}
\begin{lemma}\label{lem:non-blocked}
The total number of non-blocked interfering APs, denoted as $K$, is a Binomial random variable with success probability $p\left( {1 - {p_{\rm{b}}}} \right)$.
\end{lemma}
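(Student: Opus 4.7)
The plan is to view $K$ as the count obtained by independently thinning the original BPP of interfering APs by the blockage process, and then appeal to the standard fact that an i.i.d.\ thinning of a BPP is again a BPP with an updated success probability. Concretely, for each of the $N$ candidate space-frequency slots indexed by $i = 1, \ldots, N$, I would introduce two indicator variables: $A_i$ for the event that an AP is actually placed at slot $i$ under the underlying BPP (so $\Pr(A_i = 1) = p$), and $B_i$ for the event that, conditional on being placed, the LoS link of that AP to the reference receiver is not blocked (so $\Pr(B_i = 1 \mid A_i = 1) = 1 - p_{\rm b}$ by the derivation culminating in~\eqref{eq:blockage_prob}). Then $K = \sum_{i=1}^{N} A_i B_i$ counts the non-blocked interferers.

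The key step is to show that the products $X_i := A_i B_i$ are i.i.d.\ Bernoulli with parameter $p(1-p_{\rm b})$. For a single $i$, independence of $A_i$ and $B_i$ is immediate because the BPP governing AP placement and the PPP governing obstacle placement are, by modeling assumption, generated independently; hence $\Pr(X_i = 1) = \Pr(A_i = 1)\Pr(B_i = 1 \mid A_i = 1) = p(1-p_{\rm b})$. Across different $i$, the $A_i$'s are independent by the definition of the BPP with per-slot success probability $p$, and the $B_i$'s are treated as conditionally independent given the AP placements, consistent with the per-AP blockage probability derived in Section~\ref{sec:block_mdl}. Once $X_1, \ldots, X_N$ are identified as i.i.d.\ Bernoulli$(p(1-p_{\rm b}))$, $K$ is their sum and therefore Binomial$(N, p(1-p_{\rm b}))$, giving the claim.

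The main obstacle, and the point I would spend most of the writing on, is justifying the cross-AP independence of the $B_i$'s. Since all blockages come from a single PPP shared across the environment, two interfering APs whose radiation cones toward the reference receiver overlap will in principle see correlated blockage events. The proof must therefore either invoke an explicit assumption (the same one implicit in the per-AP derivation of $p_{\rm b}$ in~\eqref{eq:blockage_prob}, where each AP's cone is analyzed in isolation) or argue that in the dense, small-angle regime of interest the overlap volume is negligible relative to the cone volume so the correlation can be ignored. I would flag this assumption explicitly at the start of the proof and then proceed with the thinning argument above, so that the Binomial conclusion follows cleanly and transparently.
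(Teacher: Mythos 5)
Your proof is correct and rests on the same underlying idea as the paper's -- independent thinning of the Binomial placement process -- but the mechanics differ enough to be worth noting. The paper writes $K = K_1 + \cdots + K_N$ with $K_i$ Bernoulli$(1-p_{\rm b})$ and a \emph{random} number of summands, then composes probability generating functions, $G_K(\mathrm{z}) = G_N\bigl(G_{K_i}(\mathrm{z})\bigr) = \bigl[(1-p) + p\bigl((1-p_{\rm b})\mathrm{z} + p_{\rm b}\bigr)\bigr]^N$, and reads off the Binomial form. You instead keep the number of summands deterministic (the $N$ space-frequency slots), fold the placement and blockage events into a single product indicator $X_i = A_i B_i$, and sum i.i.d.\ Bernoulli$\bigl(p(1-p_{\rm b})\bigr)$ variables directly. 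Your route avoids the PGF machinery entirely and is the more elementary of the two; the paper's PGF composition is the standard compound-distribution identity and generalizes more readily if the placement count were not Binomial. One genuine point in your favor: you explicitly flag that the $B_i$'s are driven by a single shared PPP of obstacles, so that overlapping radiation cones could correlate blockage events across APs, and you state the conditional-independence assumption needed to proceed. The paper's proof silently requires the same assumption -- the step $G_K(\mathrm{z}) = \sum_{k\ge 0}\bigl(\mathbb{E}[\mathrm{z}^{K_i}]\bigr)^k \Pr(N=k)$ is only valid if the $K_i$ are i.i.d.\ and independent of the placement count -- so making it explicit is an improvement rather than a digression.
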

\begin{proof}
Let $K = {K_1} + {K_2} + ... + {K_N}$, where $K_i$ is a Bernoulli random variable and equals $1$, if the $i^{\rm {th}}$ interfering AP is not blocked, and $0$, otherwise. Therefore, the probability generating function (PGF) of $K_i$ is given by
\begin{equation}\label{eq:PGF}
{G_{K_i}}\left( {\rm{z}} \right) = (1 - {p_{\rm{b}}}){\rm{z}} + {p_{\rm{b}}}.
\end{equation}
Subsequently, we have
\begin{align}\label{eq:PGF_nonBlocked}\notag
{G_K}\left( {\rm{z}} \right)&= {\mathbb{E}}\Big[ {{{\mathop{\rm z}\nolimits} ^{\sum\limits_{i = 1}^N {{K_{i}}} }}} \Big]= \sum\limits_{k \ge 0} {{{\left( {{\mathbb{E}}\left[ {{{\mathop{{\rm z}}\nolimits} ^{K_i}}} \right]} \right)}^k}p\left( {N = k} \right)}\\ \notag
&= {G_N}\left( {{G_{K_i}}({\mathop{\rm z}\nolimits} )} \right)= {\left[ {\left( {1 - p} \right) + p\left( {(1 - {p_{\rm{b}}}){\rm{z}} + {p_{\rm{b}}}} \right)} \right]^N} \\
&= {\left[ {1 - p\left( {1 - {p_{\rm{b}}}} \right) + p\left( {1 - {p_{\rm{b}}}} \right){\rm{z}}} \right]^N}.
\end{align}
 which is the PGF of a Binomial random variable with success probability $p\left( {1 - {p_{\rm{b}}}} \right)$.
\end{proof}

Now, given the distribution of the number of active interfering APs, in Section~\ref{sec:Intf_statistics} we derive the distribution of the aggregated interference power at the reference receiver. Using the derived distribution, we provide expressions for the error performance of the desired communication link in terms of average BER and outage probabilities.

\section{Interference Statistics And System Performance} \label{sec:Intf_statistics}
In this section, the MGF of the received aggregated interference power at the reference receiver is derived considering the configuration randomness of the interfering APs in both spectral and spatial domains. Given the distribution of the active interfering APs, denoted as $K$ in~\eqref{eq:PGF_nonBlocked} the aggregated interference power at an arbitrarily located reference receiver is
\begin{align} \label{eq:received_power}
I_{\rm{agg}}={\sum\limits_{i = 1}^K {{{\cal P}_{{I_i}}}} }
\end{align}
where, $\mathcal{P}_{I_i}$ is the effective received interference power from the $i^\text{th}$ interfering AP at the output of the matched filter which is obtained by~\cite{Hamdi2009unified}
\begin{align} \label{eq:Intf_pow}
{{{\cal P}_{{I_i}}}}={q_i}{h_i}{\left\| {\ell_i} \right\|^{ - \alpha }}\Upsilon \left( {\omega_i} \right).
\end{align}
Here, $h_i$ and ${\left\| . \right\|^{ - \alpha }}$ capture the Nakagami-$m$ small scale fading and pathloss effects, respectively. $\ell_i={v_0} - {v_i}$ and $\omega_i={f_i}-{f_0}$ denote the spatial and spectral distance between the $i^{\rm{th}}$ interfering AP (located at arbitrary spatial-spectral location \{$v_i$,$f_i$\}) and the reference receiver, respectively. $q_i$ is the transmitted power of the $i^{\rm{th}}$ interfering AP.  Moreover, $\Upsilon \left( {{\omega_i}} \right)$ is defined as
\begin{align}
\Upsilon(\omega_i)=\int\limits_{{f_0} - \frac{W}{2}}^{{f_0} + \frac{W}{2}} {\Phi \left( {f - {f_i}} \right){{\left| {H\left( {f - {f_0}} \right)} \right|}^2}{\mathop{\rm d}\nolimits} f} ,
\end{align}
where, $H(f-f_0)$ is the transfer function of the matched filter at the reference receiver with arbitrary center frequency $f_0$ and bandwidth $[-\frac{W}{2},\frac{W}{2}]$, and $\Phi(f-f_i)$ is the power spectral density of the baseband equivalent of the $i^{\rm{th}}$ interfering APs signals with frequency $f_i$. Considering~\eqref{eq:Intf_pow}, as ${\left\| {\ell_i} \right\|^{ - \alpha }}$ captures the impact of spatial distances (and thereby random spatial configuration), $\Upsilon(\omega_i)$ accounts for the effect of frequency separation (and thereby random spectral configuration) in the interference power. The distribution of the aggregated interference power is obtained using the following theorem:
\begin{theorem} \label{theo:theo1}
The MGF of the aggregated interference power at the arbitrarily located receiver, is given by
\begin{align} \label{eq:MGF_alt_theo}
{M_{{I_{\rm agg}}}}({\rm{s}}) ={\left[ {1 - p\left( {1 - {p_{\rm{b}}}} \right) + p\left( {1 - {p_{\rm{b}}}} \right){M_{{{{\cal P}_{{I_i}}}}}\left( \rm{s} \right)}} \right]^N},
\end{align}
where
\begin{align} \label{eq:MGF_indivi}
&{M_{{{\cal P}_{{I_i}}}}}\left( {\rm{s}} \right)\,\, = \sum\limits_{n = 0}^\infty  {\frac{{{(q\, {\rm s})^n}}}{{n!}}\frac{{{m^{ - n}}\Gamma \left( {n + m} \right)}}{{\Gamma \left( m \right)}}\frac{{2\gamma_n \left( {{f_s},{f_e}} \right)\kappa_n \left( {R,{v_0}} \right)}}{{{R^2}\left( {{f_e} - {f_s}} \right)}}}  .\\ \notag
\end{align}
Here, ${M_{{{\cal P}_{{I_i}}}}}\left( {\rm{s}} \right)$ is the MGF of the $i^ {\rm{th}}$ interfering AP power, where
\begin{align}
{\gamma_n \left( {{f_s},{f_e}} \right)}=\hspace{-0.3cm}{\int\limits_0^{\min \left( {\left| {{\omega _e}} \right|,\left| {{\omega _s}} \right|} \right)} \hspace{-0.3cm}{\Upsilon {{\left( \omega  \right)}^n}{\rm{d}}\omega }  +\hspace{-0.3cm} \int\limits_0^{\max \left( {\left| {{\omega _e}} \right|,\left| {{\omega _s}} \right|} \right)} \hspace{-0.3cm}{\Upsilon {{\left( \omega  \right)}^n}{\rm{d}}\omega } },
\end{align}
\begin{align} \notag
{\kappa_n \left( {R,{v_0}} \right)}&=\hspace{-0.1cm}\int\limits_0^{R - \left\| {{v_0}} \right\|} {\hspace{-0.3cm}{\ell ^{ - n\alpha  + 1}}\,{\rm{d}}\ell \,\,} \\
&+\hspace{-0.1cm} \int\limits_{R - \left\| {{v_0}} \right\|}^{R + \left\| {{v_0}} \right\|} {\hspace{-0.2cm}\frac{{{\ell ^{ - n\alpha  + 1}}}}{\pi }{{\cos }^{ - 1}}\left( {\frac{{{{\left\| {{v_0}} \right\|}^2} - {R^2} + {\ell ^2}}}{{2\ell \left\| {{v_0}} \right\|}}} \right)\,{\rm{d}}\ell } .
\end{align}
and $\omega_{e}={f_e} - {f_0}$, $\omega_{s}={f_s} - {f_0}$.
\end{theorem}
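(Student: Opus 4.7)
The plan is to decompose the MGF in two stages: first marginalizing over the number of active interferers $K$, then computing the per-interferer MGF. Conditioned on $K$, the individual interference powers $\{\mathcal{P}_{I_i}\}_{i=1}^{K}$ are i.i.d., since the BPP places each AP independently in the joint space-frequency domain, the Nakagami-$m$ fading is independent across links, and the obstacle PPP acts independently on each radiation cone. Therefore
\begin{equation}
M_{I_{\rm agg} \mid K}({\rm s}) = \bigl[M_{\mathcal{P}_{I_i}}({\rm s})\bigr]^{K},
\end{equation}
and de-conditioning gives $M_{I_{\rm agg}}({\rm s}) = \mathbb{E}_K\bigl[M_{\mathcal{P}_{I_i}}({\rm s})^{K}\bigr] = G_K\bigl(M_{\mathcal{P}_{I_i}}({\rm s})\bigr)$. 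Substituting the Binomial PGF $G_K(z) = [1-p(1-p_{\rm b}) + p(1-p_{\rm b})z]^N$ established in Lemma~\ref{lem:non-blocked} immediately yields the outer expression~\eqref{eq:MGF_alt_theo}.

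For the inner MGF, I would expand the exponential in a Taylor series and exploit the mutual independence of the fading $h_i$, the spatial distance $\ell_i$, and the spectral offset $\omega_i$ to factorize each term:
\begin{equation}
M_{\mathcal{P}_{I_i}}({\rm s}) = \sum_{n=0}^{\infty} \frac{(q\,{\rm s})^{n}}{n!}\, \mathbb{E}[h^{n}]\, \mathbb{E}\bigl[\|\ell\|^{-n\alpha}\bigr]\, \mathbb{E}\bigl[\Upsilon(\omega)^{n}\bigr].
\end{equation}
The Nakagami-$m$ moment $\mathbb{E}[h^{n}] = m^{-n}\Gamma(n+m)/\Gamma(m)$ supplies the fading factor in~\eqref{eq:MGF_indivi}. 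The spatial moment follows by integrating $\ell^{-n\alpha}$ against the piecewise density $f_L(\ell)$ from~\eqref{eq:distance_distribution}: the two branches $0<\ell\le R-\|v_0\|$ and $R-\|v_0\|<\ell\le R+\|v_0\|$ produce exactly the two integrals that define $\kappa_n(R,v_0)$, while the common factor $2/R^2$ is absorbed into the prefactor.

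The spectral moment requires identifying the distribution of $\omega_i = f_i - f_0$. Since the BPP marginal in frequency is uniform on $[f_s,f_e]$, $\omega_i$ is uniform on $[\omega_s,\omega_e]$ with density $1/(f_e-f_s)$. Because $f_0 \in [f_s,f_e]$ we have $\omega_s \le 0 \le \omega_e$, and using the even symmetry of $\Upsilon(\cdot)$ about $0$ (which follows from the matched-filter pairing with the baseband PSD centered at the interferer's frequency) I would split
\begin{equation}
\int_{\omega_s}^{\omega_e} \Upsilon(\omega)^{n}\,{\rm d}\omega = \int_{0}^{|\omega_s|} \Upsilon(\omega)^{n}\,{\rm d}\omega + \int_{0}^{|\omega_e|} \Upsilon(\omega)^{n}\,{\rm d}\omega,
\end{equation}
and relabel the two non-negative upper limits as $\min(|\omega_e|,|\omega_s|)$ and $\max(|\omega_e|,|\omega_s|)$ to recover $\gamma_n(f_s,f_e)$. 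Combining the three factors and the normalizing constants gives~\eqref{eq:MGF_indivi}.

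The main obstacle is the spectral step: one must carefully derive the distribution of $\omega$ under the BPP (the paper has not separately tabulated it), exploit the symmetry of $\Upsilon$ to justify folding the integral about the origin, and handle the case $\omega_s = 0$ or $\omega_e = 0$ as boundary cases. A secondary technicality is justifying the interchange of summation and expectation in the Taylor expansion; since $\mathcal{P}_{I_i} \ge 0$, Tonelli's theorem applies term-by-term, and the series is used as a formal representation whose validity for the subsequent BER and outage computations is ensured by the boundedness of $\Upsilon$ on $[\omega_s,\omega_e]$ and of $\ell$ away from zero on the support of~$f_L$.
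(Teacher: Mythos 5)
Your proposal matches the paper's proof essentially step for step: the outer expression is obtained from ${M_{{I_{\rm agg}}}}({\rm{s}}) = {G_K}\bigl( {M_{{{\cal P}_{{I_i}}}}}({\rm{s}}) \bigr)$ with the Binomial PGF of Lemma~\ref{lem:non-blocked}, and the inner MGF from a series expansion of the exponential followed by independent moment computations over the fading, spatial, and spectral variables. The only cosmetic difference is that the paper integrates against the folded spectral-distance density ${f_\Omega }\left( \omega \right)$ derived in its Appendix (Lemma~\ref{lem:lemma4}), which is exactly the piecewise-uniform density your symmetry-folding argument produces on the fly.
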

\begin{proof}
In order to calculate the MGF of the received aggregated interference power, we have
\begin{align} \label{eq:MGF_alt_det} \notag
{M_{{I_{\rm{agg}}}}}({\rm{s}}) &={\mathbb{E}}\left[ {{{\rm{e}}^{{\rm{s}}\sum\limits_{i = 1}^K {{{\cal P}_{{I_i}}}} }}} \right] = \sum\limits_{k \ge 0} {{{\left( {{\mathbb{E}}\left[ {{{\rm{e}}^{{\rm{s}}{{\cal P}_{{I_i}}}}}} \right]} \right)}^k}p\left( {K = k} \right)}   \\ \notag
&= {G_K}\left( {M_{{{{\cal P}_{{I_i}}}}}\left( \rm{s} \right)} \right)\\
&= {\left[ {1 - p\left( {1 - {p_{\rm{b}}}} \right) + p\left( {1 - {p_{\rm{b}}}} \right){M_{{{{\cal P}_{{I_i}}}}}\left( \rm{s} \right)}} \right]^N},
\end{align}
where, ${M_{{{\cal P}_{{I_i}}}}}\left( {\rm{s}} \right)$ is the MGF of the $i^ {\rm{th}}$ interfering AP power and calculated by~\footnote{We drop the subscript $i$ for notational simplicity.}
\begin{align} \label{eq:MGF_indivi_det} \notag
{M_{{{\cal P}_{{I_i}}}}}\left( {\rm{s}} \right) &= {\mathbb{E}}\left[ {{{\rm{e}}^{{\rm{s}}\,q\,h\,{{ \ell  }^{ - \alpha }}\Upsilon \left( \omega  \right)}}} \right] \\ \notag
&= \int\limits_0^\infty  \int\limits_0^{R + \left\| {{v_0}} \right\|} \int\limits_0^{\max \left( {\left| {{\omega _e}} \right|,\left| {{\omega _s}} \right|} \right)} {{\rm{e}}^{{\rm{s}}\,q\,h\,{{ \ell  }^{ - \alpha }}\Upsilon \left( \omega  \right)}}\\
&\hspace{3 cm} \times \,\,{f_\Omega }\left( \omega  \right){f_L}\left( \ell  \right)f\left( h \right){\rm{d}}\omega \,{\rm{d}}\ell \,{\rm{d}}h.
\end{align}
Given the BPP assumption of the location of interferer in the space-frequency domain, the distributions of spectral distance~\footnote{Detailed derivation of the distribution can be found in the Appendix.} is derived as
\begin{align}\label{eq:frequency_distribution}
&{f_\Omega }\left( \omega  \right) {=} \left\{ \begin{array}{l}
\frac{2}{{{f_e} - {f_s}}}\,\,\,\,\,\,\,\,\,\,\,\,\,\,0 < \omega  \le \min \left( {\left| {\omega_{e}} \right|,\left| {\omega_{s}} \right|} \right)\\
\frac{1}{{{f_e} - {f_s}}}\,\,\,\,\min \left( {\left| {\omega_{e}} \right|,\left| {\omega_{s}} \right|} \right) < \omega  \le \max \left( {\left| {\omega_{s}} \right|,\left| {\omega_{s}} \right|} \right)
\end{array} \right.
\end{align}
where, $\omega_{e}={f_e} - {f_0}$ and $\omega_{s}={f_s} - {f_0}$. Having the spatial and spectral distance distributions given in~\eqref{eq:distance_distribution} and~\eqref{eq:frequency_distribution}, respectively, Nakagami-$m$ assumption of the small scale fading component, i.e., $h$, and using the polynomial expansion of the exponential function, the integral in~\eqref{eq:MGF_indivi_det} reduces to that in~\eqref{eq:MGF_indivi}. Subsequently, by substituting~\eqref{eq:MGF_indivi} in~\eqref{eq:MGF_alt_det} the result in Theorem~\ref{theo:theo1} is obtained.
\end{proof}
In order to evaluate the system performance using the derived interference model in Theorem~\ref{theo:theo1}, we invoke the result in~\cite{Niknam2017Spatial-Spectral}, in which using the MGF of the aggregated interference power, the average BER is calculated by the following expression,
\begin{align} \label{Average_BER} \notag
{\rm{BER}}_{\rm{ave}}&= \frac{1}{2} - \frac{{\sqrt c }}{\pi }\frac{{\Gamma (m + \frac{1}{2})}}{{\Gamma (m)}}\int_0^\infty  {\frac{{{{\kern 1pt} _1}{F_1}( m+\frac{1}{2};\frac{3}{2};-c{\rm{s}})}}{{\sqrt {\rm{s}} }}} \\
 &\hspace{1.25cm}\times {M_{I_{\rm{agg}}}}\left( {-\frac{m}{{{q_0}\ell _0^{ - \alpha }}}{\rm{s}}} \right){{\rm{e}}^{-{\frac{{ m\sigma _n^2}}{{{q_0}\ell _0^{ - \alpha }}} } {\rm{s}}}}{\rm{ds}},
\end{align}
where, $q_0$ and $\ell_0$ denote the power of the reference transmitter and the distance between the reference transmitter-receiver pair. Moreover, $m$ is the shape factor of Nakagami distributed channel and $c$ is a constant that depends on the modulation type. In addition, $\sigma _n^2$ represents the power of the additive noise bandlimited to the signal bandwidth $[-\frac{W}{2},\frac{W}{2}]$.
\begin{figure}[t]
\centering
\includegraphics[width=8cm,height=6cm]{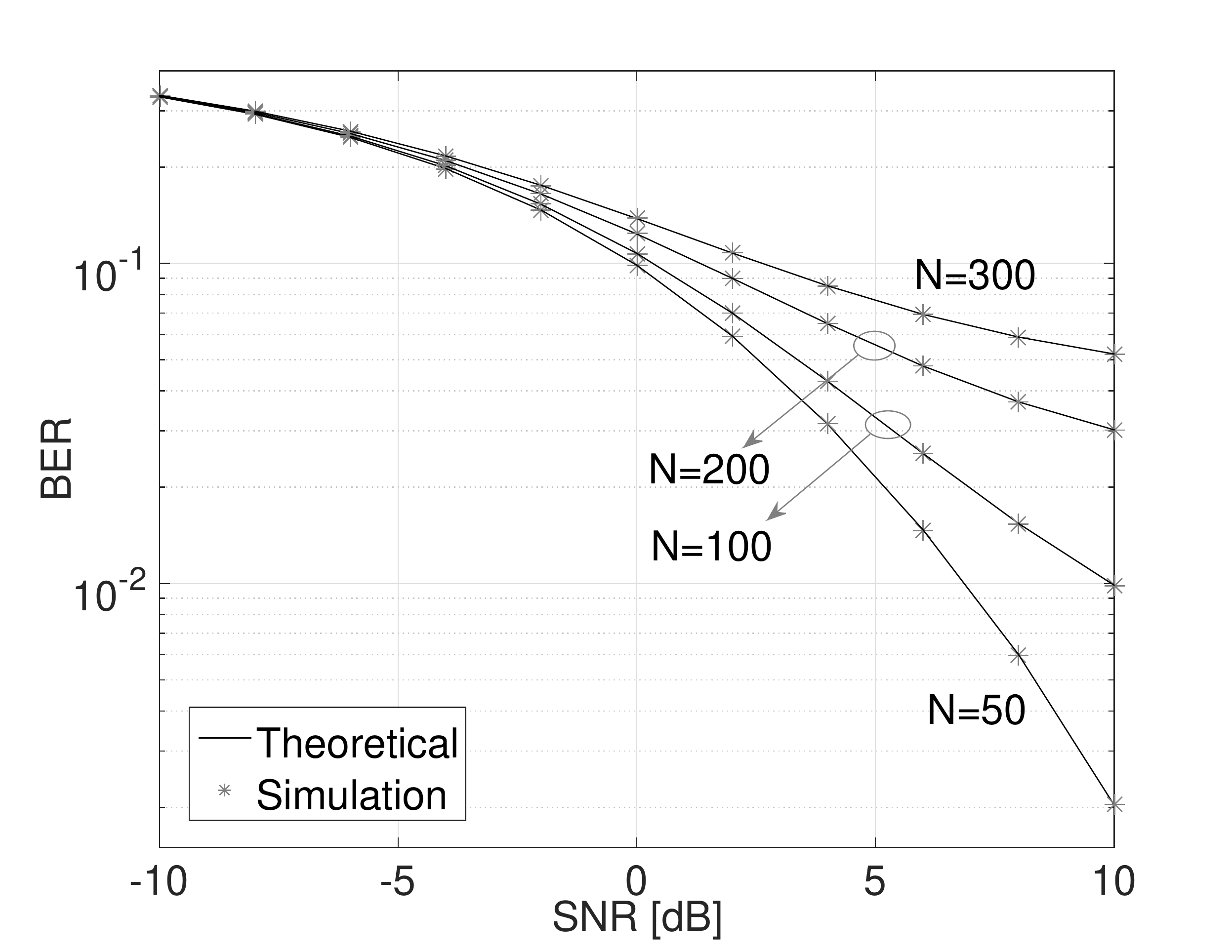}
\caption{Bit error rate versus SNR for different $N$ values, $\rho{=}10^{-2}$.}
\label{fig:BERvsSNR_N}
\end{figure}
\begin{figure}[t]
\centering
\includegraphics[width=8cm,height=6cm]{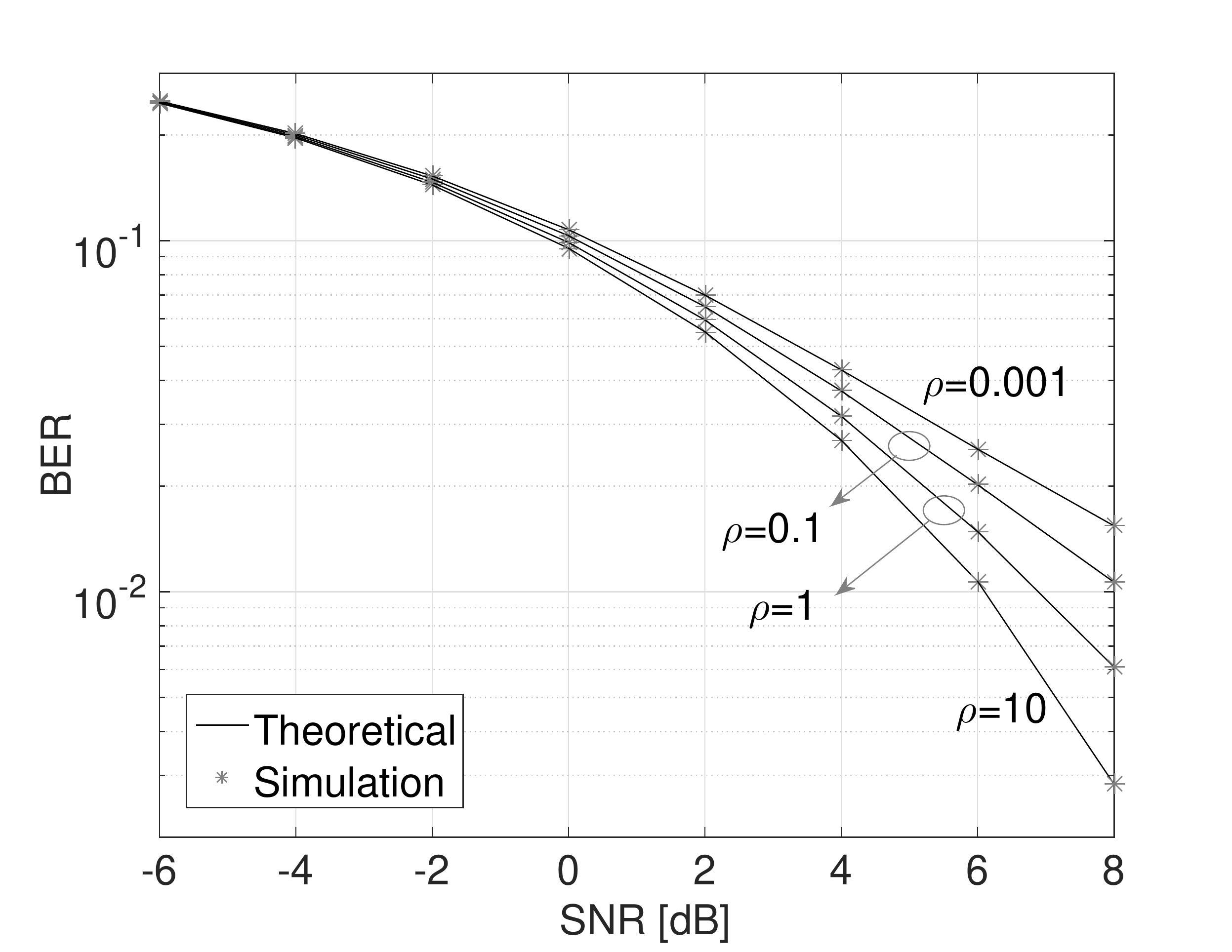}
\caption{Bit error rate versus SNR for different $\rho$ values, $N{=}100$.}
\label{fig:BERvsSNR_rho}
\end{figure}
\begin{figure}[t]
\centering
\includegraphics[width=8cm,height=6cm]{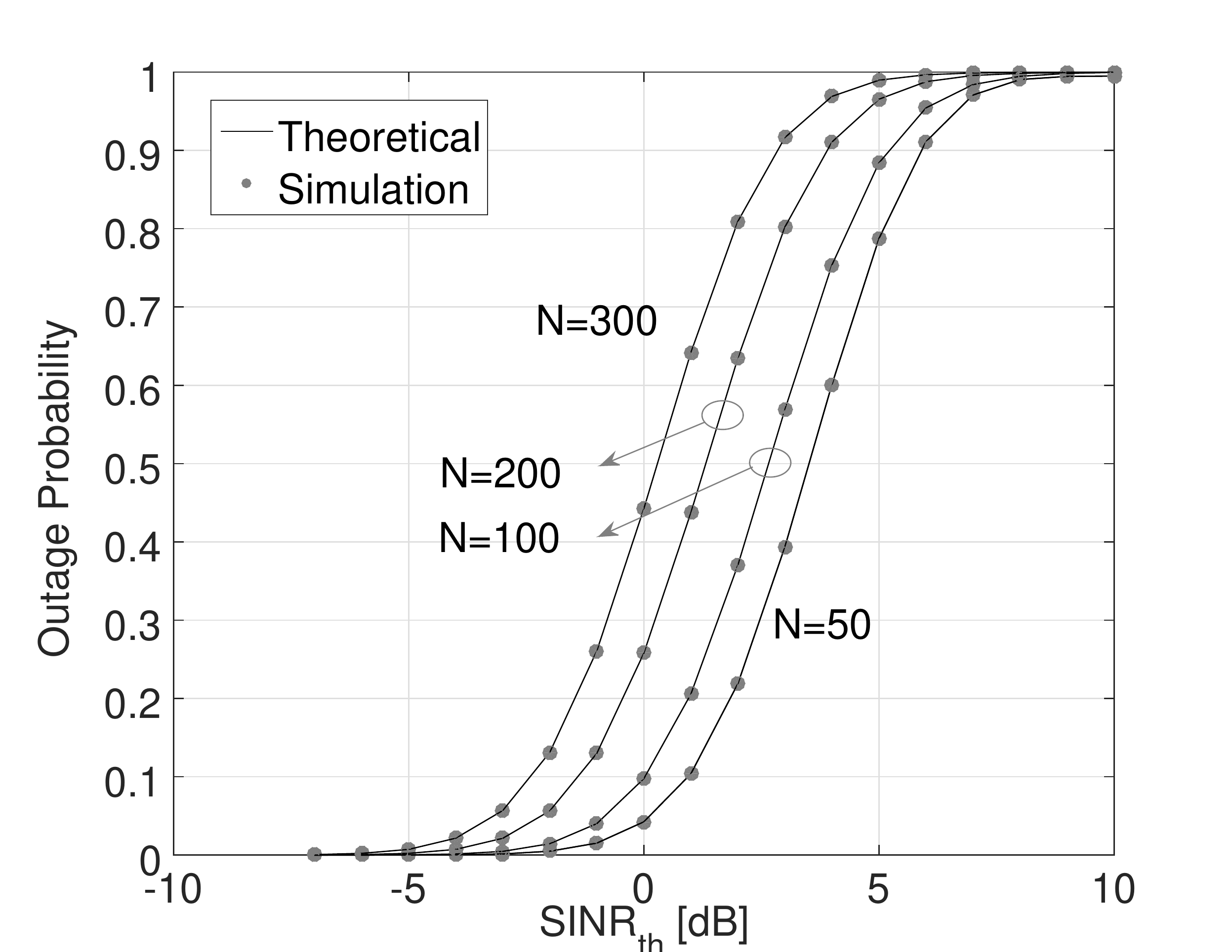}
\caption{Outage probability versus SINR threshold for different $N$ values, $\rho{=}10^{-2}$.}
\label{fig:OutagevsSINR_N}
\vspace{-0.14cm}
\end{figure}
\begin{figure}[t]
\centering
\includegraphics[width=8cm,height=6cm]{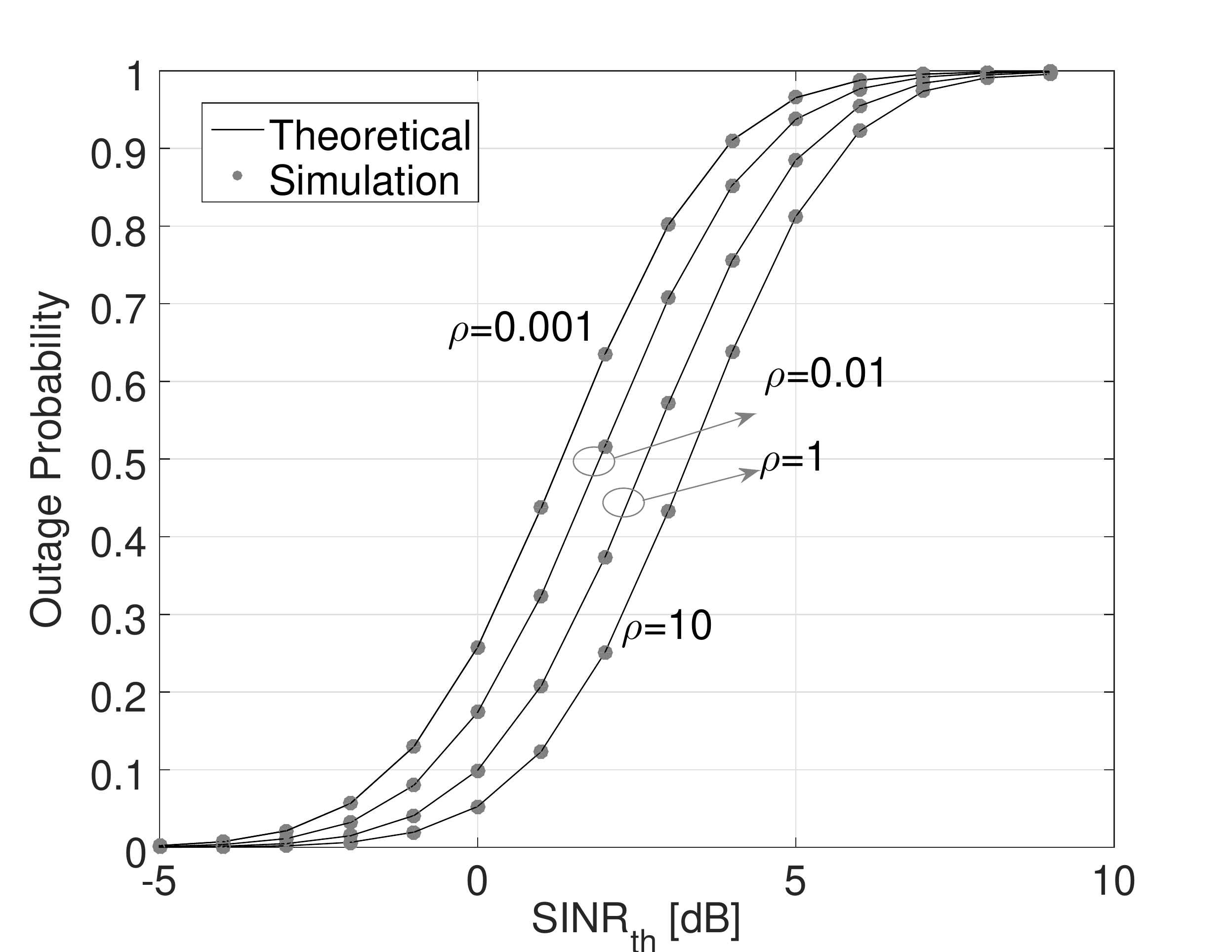}
\caption{Outage probability versus SINR threshold for different $\rho$ values, $N{=}150$.}
\label{fig:OutagevsSINR_rho}
\end{figure}

The outage probability $P_{outage}(\eta)$ is defined as the cumulative distribution function (CDF) of the signal-to-noise-plus-interference ratio (SINR) evaluated at a threshold $\eta$ which is
\begin{align} \label{eq:outage_general}
{P_{outage}} = \text{Pr}\left( {\text{SINR} \le \eta } \right) = \text{Pr}\Bigg( {\frac{{{q_0}{h_0}\ell _0^{ - \alpha }}}{{\sum\limits_{i = 1}^K {{{\cal P}_{{I_i}}}}  + \sigma _n^2}} \le \eta } \Bigg).
\end{align}
In order to calculate the above probability we first rearrange the argument in~\eqref{eq:outage_general} and obtain the following expression
\begin{align} \label{eq:outage_general_rearrange}
{P_{outage}} = \text{Pr}\Bigg( {\overbrace {-\frac{{\eta \sum\limits_{i = 1}^U {{{\cal P}_{{I_i}}}} }}{{\sigma _n^2}} + \frac{{{q_0}{h_0}\ell _0^{ - \alpha }}}{{\sigma _n^2}}}^\Lambda   \le    \eta } \Bigg) = {F_\Lambda }\left( {  \eta } \right),
\end{align}
where, $F_\Lambda$ denotes the CDF of random variable $\Lambda$. Now, using Gil-Pelaez inversion formula
\begin{align} \label{eq:Gil-Pelaez}
{F_\Lambda }\left( \lambda  \right) = \frac{1}{2} - \frac{1}{\pi }\int_0^\infty  {{\mathop{\rm Im}\nolimits} \left\{ {{M_\Lambda }\left( {j{\rm s}} \right){e^{ - j{\rm s}\lambda }}} \right\}\frac{{{\rm{d}}{\rm s}}}{{\rm s}}},
\end{align}
and making the following substitution
\begin{align} \label{eq:multiplication_of_MGF}
{M_\Lambda }\left( {\rm s} \right) = {\left( {1 - \frac{{{q_0}\ell _0^{ - \alpha }{\rm s}}}{{m\sigma _n^2}}} \right)^{ - m}}{M_{I_{\rm{agg}}}}\left( {-\frac{\eta }{{\sigma _n^2}}{\rm s}} \right),
\end{align}
the outage probability can be simplified and written as follows
\begin{align} \label{eq:Outage} \notag
&{P_{outage}}= \frac{1}{2} - \frac{1}{\pi }\int_0^\infty  {\mathop{\rm Im}\nolimits} \Bigg\{ {{\left( {1 - \frac{{j{q_0}\ell _0^{ - \alpha }s}}{{m\sigma _n^2}}} \right)}^{ - m}}\\
&\hspace{4.5cm}\times {M_{I_{\rm{agg}}}}\left( -{\frac{{j\eta }}{{\sigma _n^2}}{\rm s}} \right){e^{-j{\rm s}\eta }} \Bigg\}\frac{{{\rm{d}}{\rm s}}}{{\rm s}}.
\end{align}

\section{Numerical Results} \label{sec:Simulation}
In this section, we present numerical results to characterize the spatial-spectral interference model as a function of network parameters. A circular area of radius $R=25$ is considered. The reference receiver is located at $\left\| {{v_0}} \right\|{=10}$ and $f_0=62$ GHz. Moreover, $f_s$ and $f_s$ are set to 58 GHz and 64 GHz, respectively. We assume the pathloss exponent, $\alpha$, and the shape factor of Nakagami distribution, $m$, are set to 2.5 and 5, respectively. Here, the transmitted power of all interfering APs are assumed to be the same and set to 30 dBm. The beamwidth of the mmWave signals, i.e., $2\theta$, is set to 20 degrees. We assume Gaussian PSD for interfering APs (it can be any PSD shape) and a raised-cosine (RC) matched filter at the reference receiver side. It is worth mentioning that the proposed model is not limited to these assumptions on specific power spectral densities of the desired and interferers' signals.

In Fig.~\ref{fig:BERvsSNR_N} and~\ref{fig:BERvsSNR_rho} , BER versus SNR is shown for different $N$ and $\rho$ values, respectively. As expected, the performance of the system degrades as $N$ increases. The same trend can be observed in Fig.~\ref{fig:BERvsSNR_rho} where by increasing the density of the blockages, larger number of interfering APs are blocked. Therefore, the accumulated interference signal decreases and results in a better performance. This is an important result that indicates mmWave {signals\textquotesingle} sensitivity to blockages can be advantageous in densely deployed networks, where objects and users that serve as obstacles reduce the level of interference.

In Fig.~\ref{fig:OutagevsSINR_N} and \ref{fig:OutagevsSINR_rho}, the error performance of the desired communication link is shown in terms of outage probability. Here, the performance decreases when there is lower number of active interfering APs (i.e., decreasing the density of blockages, $\rho$, and increasing total number of interfering APs, $N$). It is important to note that in all Fig.~\ref{fig:BERvsSNR_N}, \ref{fig:BERvsSNR_rho}, \ref{fig:OutagevsSINR_N} and \ref{fig:OutagevsSINR_rho}, the simulated average BER and outage probability plots align well with the result from the theoretically derived interference model. Moreover, the performance of the desired communication link with and without consideration of the effect of the blockages sensitivity of the interfering links are illustrated in terms of both metrics, i.e., BER and outage probability, in Fig.~\ref{fig:BlockedvsNon_BER} and \ref{fig:BlockedvsNon_Outage}, respectively. As illustrated, unlike traditional interference model, where the impact of the blockages is not considered, directionality of mmWave signals leads to a noticeably different interference profile which is effectively captured by the proposed model. It is worth reiterating, the proposed model considers the uncertainty of the interfering node configuration in both spatial and spectral domains, simultaneously. \vspace{0.5cm}

\begin{figure}[t]
\centering
\includegraphics[width=8cm,height=6cm]{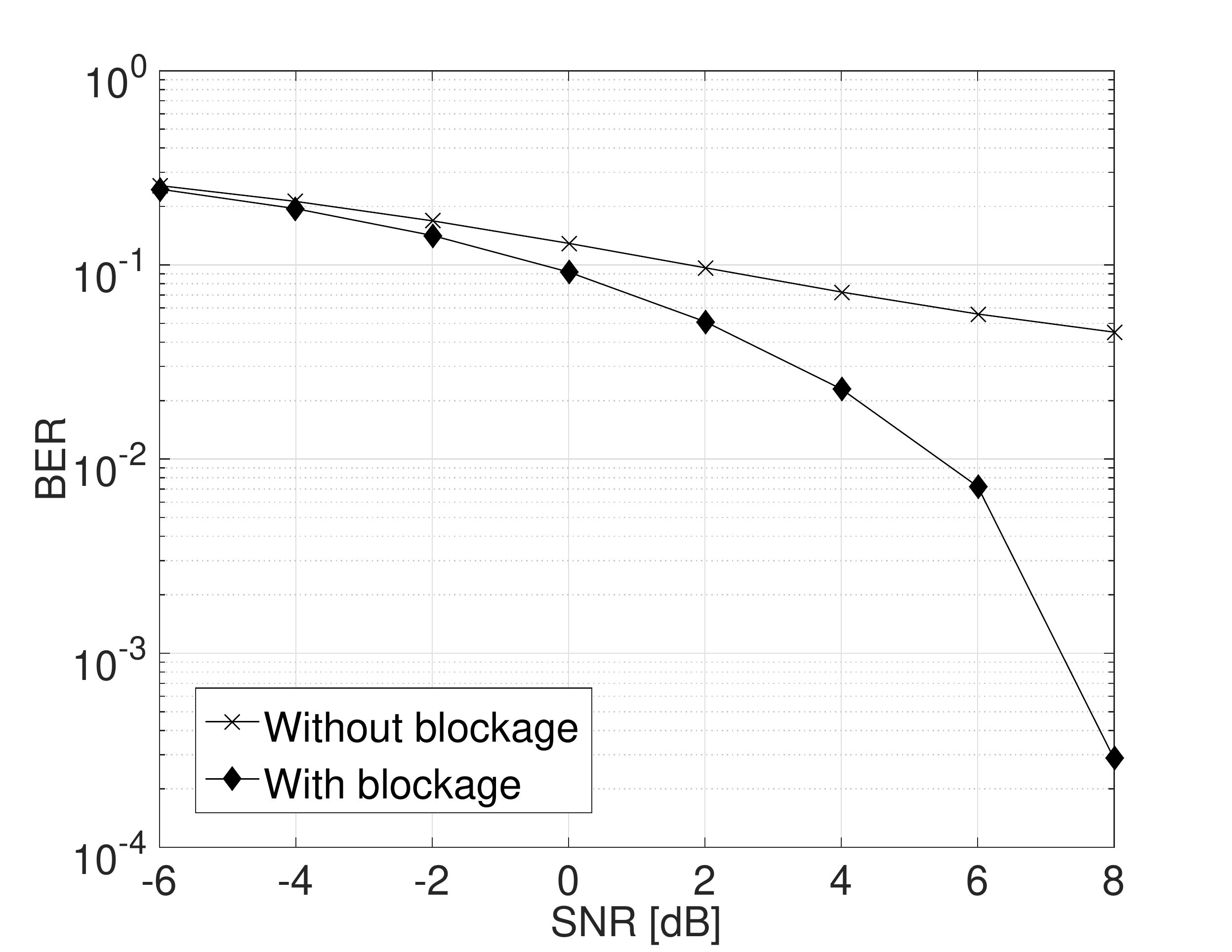}
\caption{Bit error rate versus SNR for $N=200$.}
\label{fig:BlockedvsNon_BER}
\end{figure}
\begin{figure}[t]
\centering
\includegraphics[width=8cm,height=6cm]{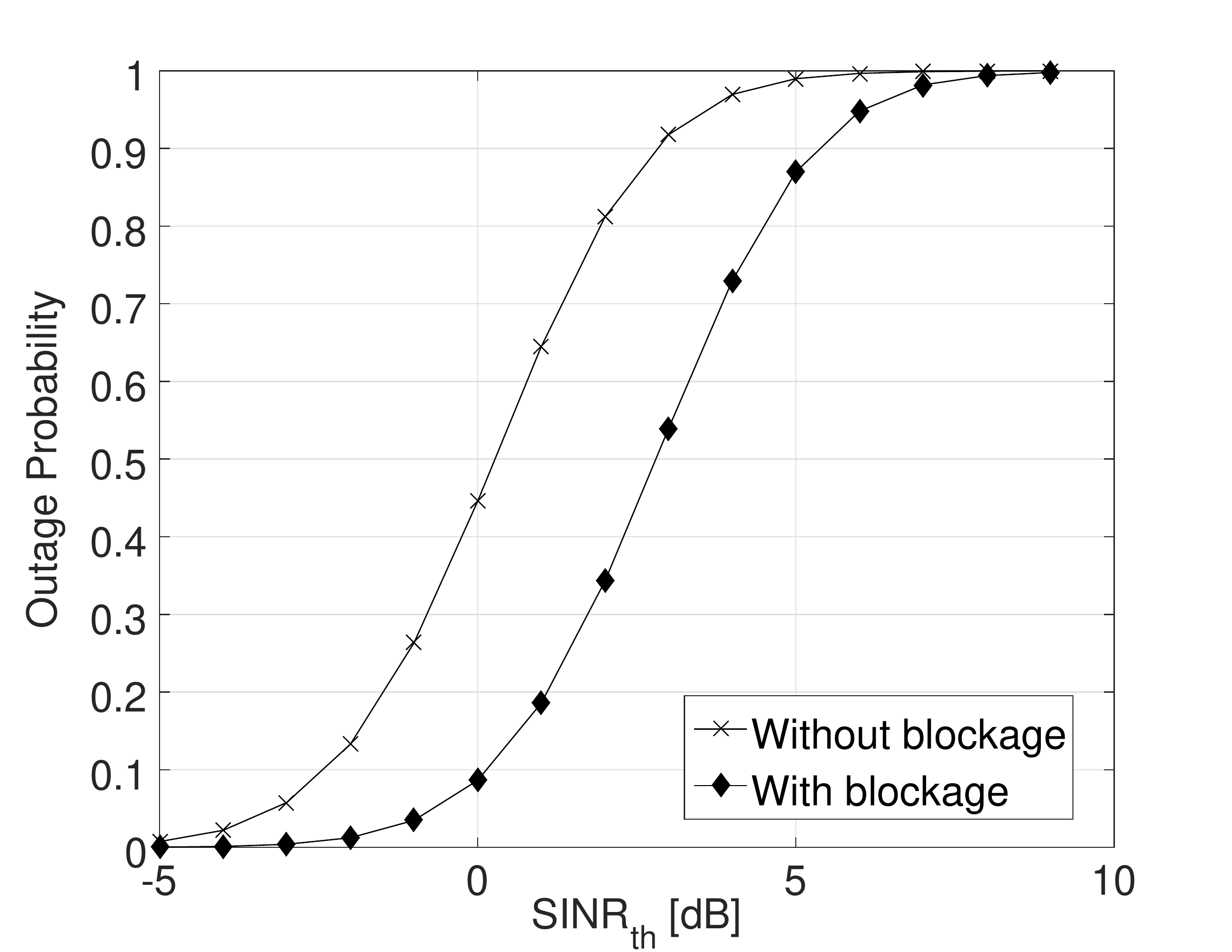}
\caption{Outage probability versus SINR threshold for $N=300$.}
\label{fig:BlockedvsNon_Outage}
\end{figure}

\section{Conclusion} \label{sec:Conclusion}
In this paper, we propose a spatial-spectral interference model for dense finite-area 5G mmWave network considering the effect of blockages on mmWave signals. The proposed model accounts for randomness in both spectral and spatial network configurations as well as blockage effects. The interference model builds off a new blockage model which captures the average number of obstacles that cause a complete link outage. Using numerical simulations, we validate the theoretical results and demonstrate how beam directionality and randomness in node configuration impact the accumulated interference at arbitrary locations of a mmWave network.
\begin{figure}[b]
\centering
\includegraphics[scale=0.75]{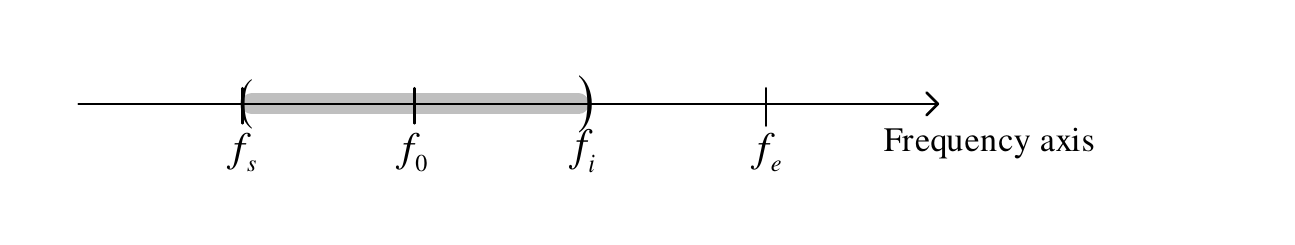}
\caption{The frequency axis.}
\label{fig:frequency_distance}
\end{figure}

\vspace{0.2cm}
\appendix
We assume that the sequence of frequencies used by interfering APs, i.e., ${f_i}$, are uncorrelated. However, the sequence of the frequency distance between the frequency of the victim receiver, $f_0$, and frequency of the $i^{\rm{th}}$ interfering AP, $f_i$, i.e., ${w_i=f_i-f_0}$, are correlated due to the common factor $f_0$. The conditional probability density function (PDF) of ${w_i=f_i-f_0}$ is given by the following lemma:
\begin{lemma} \label{lem:lemma4}
The conditional PDF of ${w_i=f_i-f_0}$ given $f_0$ is
\begin{align}
&{f_\Omega }\left( \omega  \right) {=} \left\{ \begin{array}{l}
\frac{2}{{{f_e} - {f_s}}}\,\,\,\,\,\,\,\,\,\,\,\,\,\,0 < \omega  \le \min \left( {\left| {\omega_{e}} \right|,\left| {\omega_{s}} \right|} \right)\\
\frac{1}{{{f_e} - {f_s}}}\,\,\,\,\min \left( {\left| {\omega_{e}} \right|,\left| {\omega_{s}} \right|} \right) < \omega  \le \max \left( {\left| {\omega_{e}} \right|,\left| {\omega_{s}} \right|} \right)
\end{array} \right.
\end{align}
where $\omega_{e}={f_e} - {f_0}$ and $\omega_{s}={f_s} - {f_0}$. We drop the subscript $i$ for notational simplicity.
\end{lemma}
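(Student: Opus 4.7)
The plan is to recognize that, under the BPP model, the interferer frequency $f_i$ is conditionally uniform on $[f_s, f_e]$, so the statement is a standard change-of-variable calculation for the density of $\omega = |f_i - f_0|$. Because the map $f_i \mapsto |f_i - f_0|$ is two-to-one on the interior of its image (folding the band around the point $f_0$), I would compute the CDF $F_\Omega(\omega \mid f_0)$ first and then differentiate, rather than attempting a direct Jacobian argument.

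The first step is to write, for $\omega \ge 0$,
\[
F_\Omega(\omega \mid f_0) = \Pr\bigl(f_0 - \omega \le f_i \le f_0 + \omega \,\big|\, f_0\bigr) = \frac{\min(f_0+\omega,\,f_e) - \max(f_0-\omega,\,f_s)}{f_e - f_s},
\]
using that $f_i$ is uniform on $[f_s,f_e]$. The second step is a case split on whether the symmetric window $[f_0-\omega,\,f_0+\omega]$ is fully contained in $[f_s,f_e]$ or is clipped on one side, expressed via $\omega_s = f_s - f_0$ and $\omega_e = f_e - f_0$. In the inner regime $0 < \omega \le \min(|\omega_s|,|\omega_e|)$ neither endpoint is clipped, so $F_\Omega = 2\omega/(f_e-f_s)$, whose derivative yields the density $2/(f_e-f_s)$. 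In the outer regime $\min(|\omega_s|,|\omega_e|) < \omega \le \max(|\omega_s|,|\omega_e|)$ exactly one endpoint is clipped, and the expression collapses to $F_\Omega = (\omega + \min(|\omega_s|,|\omega_e|))/(f_e - f_s)$, whose derivative is $1/(f_e-f_s)$. For $\omega > \max(|\omega_s|,|\omega_e|)$ the event is empty, so the density is zero; this is implicit in the piecewise statement.

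The main obstacle is just the bookkeeping in the second regime: one must verify that the two sub-cases (receiver closer to $f_s$, i.e.\ $|\omega_s| < |\omega_e|$, versus receiver closer to $f_e$) produce the same $1/(f_e-f_s)$ slope once written symmetrically in terms of $|\omega_s|$ and $|\omega_e|$. This is a short check: in either sub-case only one of the two boundary constraints in the $\min/\max$ above becomes active, and the active one contributes the $+\min(|\omega_s|,|\omega_e|)$ offset while the free side contributes the linear term $\omega$. Differentiating piecewise then immediately gives the claimed two-level density, completing the proof.
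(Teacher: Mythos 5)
Your proposal is correct and follows essentially the same route as the paper: both compute the CDF of the spectral distance as the normalized length of the intersection of the symmetric window $[f_0-\omega,\,f_0+\omega]$ with the band $[f_s,f_e]$ (giving $2\omega/(f_e-f_s)$ in the unclipped regime and $(\omega+\min(|\omega_s|,|\omega_e|))/(f_e-f_s)$ in the clipped regime) and then differentiate. Your version merely makes the clipping explicit via the $\min/\max$ formula, which is a slightly cleaner writeup of the paper's geometric ``intersection of line segments'' argument.
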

\begin{proof}
Considering the frequency axis given in Fig.~\ref{fig:frequency_distance}, the frequency distance between the reference receiver and an interfering AP is calculated,\\
1) When $\omega \le \min \left( {\left| {\omega_{e}} \right|,\left| {\omega_{s}} \right|} \right)$, the CDF of ${\omega}$, i.e., ${F_{\Omega}}\left( \omega  \right)$, is the intersection of line segment $2\left|\omega\right|$ and $\left|{f_e}-{f_s}\right|$ divided by $\left|{f_e}-{f_s}\right|$.\\
2) When $\min \left( {\left| {\omega_{e}} \right|,\left| {\omega_{s}} \right|} \right) < \omega  \le \max \left( {\left| {\omega_{s}} \right|,\left| {\omega_{s}} \right|} \right)$, the CDF is $\frac{{\left| {\min \left( {\left| {{\omega _e}} \right|,\left| {{\omega _s}} \right|} \right)} \right| + {\omega }}}{{\left| {{f_e} - {f_s}} \right|}}$.
\begin{align} \label{eq:freq_dist_CDF} \notag
&{F_\Omega }\left( {{\omega }} \right) \\
&\hspace{-0.1cm}{=} \Bigg\{ \begin{array}{l} \hspace{-0.2cm}
\frac{{2{\omega }}}{{\left| {{f_e} - {f_s}} \right|}}\,\,\,\,\,\,\,\,\,\,\,\,\,\,\,\,\,\,\,\,\,\,\,\,\,\,\,\,\,\,\,\,\,\,\,\,\,\,\,\,\,\,\,\,\,\,\,0 < \omega  \le \min \left( {\left| {{\omega _e}} \right|,\left| {{\omega _s}} \right|} \right)\\
\hspace{-0.2cm}\frac{{\left| {\min \left( {\left| {{\omega _e}} \right|,\left| {{\omega _s}} \right|} \right)} \right| + {\omega }}}{{\left| {{f_e} - {f_s}} \right|}}\,\,\,\,\,\min \left( {\left| {{\omega _e}} \right|,\left| {{\omega _s}} \right|} \right) < {\omega}  \le \max \left( {\left| {{\omega _e}} \right|,\left| {{\omega _s}} \right|} \right)
\end{array}
\end{align}
Subsequently, the PDF of the frequency distance is given by taking derivative of the CDF with respect to $\omega$.
\end{proof}

\vspace{0.4cm}
\bibliographystyle{IEEEtran}

\bibliography{IEEEabrv,GBbibfile}

\begin{thebibliography}{10}
\providecommand{\url}[1]{#1}
\csname url@samestyle\endcsname
\providecommand{\newblock}{\relax}
\providecommand{\bibinfo}[2]{#2}
\providecommand{\BIBentrySTDinterwordspacing}{\spaceskip=0pt\relax}
\providecommand{\BIBentryALTinterwordstretchfactor}{4}
\providecommand{\BIBentryALTinterwordspacing}{\spaceskip=\fontdimen2\font plus
\BIBentryALTinterwordstretchfactor\fontdimen3\font minus
  \fontdimen4\font\relax}
\providecommand{\BIBforeignlanguage}[2]{{%
\expandafter\ifx\csname l@#1\endcsname\relax
\typeout{** WARNING: IEEEtran.bst: No hyphenation pattern has been}%
\typeout{** loaded for the language `#1'. Using the pattern for}%
\typeout{** the default language instead.}%
\else
\language=\csname l@#1\endcsname
\fi
#2}}
\providecommand{\BIBdecl}{\relax}
\BIBdecl

\bibitem{Nokia2015ten}
N.~Corporation, ``Ten key rules of 5{G} deployment,'' \emph{Nokia Networks
  white paper}, 2015.

\bibitem{MacCartneyJr2016}
G.~R. MacCartney, S.~Deng, S.~Sun, and T.~S. Rappaport, ``Millimeter-wave human
  blockage at 73 {GH}z with a simple double knife-edge diffraction model and
  extension for directional antennas,'' in \emph{Proc. IEEE Veh. Tech. Conf.},
  Sept 2016, pp. 1--6.

\bibitem{Andrew2014what}
J.~G. Andrews, S.~Buzzi, W.~Choi, S.~V. Hanly, A.~Lozano, A.~C.~K. Soong, and
  J.~C. Zhang, ``What will 5{G} be?'' \emph{{IEEE} J. Sel. Areas Commun.},
  vol.~32, no.~6, June 2014.

\bibitem{Andrews2010spatial}
J.~G. Andrews, R.~K. Ganti, M.~Haenggi, N.~Jindal, and S.~Weber, ``A primer on
  spatial modeling and analysis in wireless networks,'' \emph{{IEEE} Commun.
  Mag.}, vol.~48, no.~11, pp. 156--163, Nov. 2010.

\bibitem{Ericsson2016Radio}
E.~Dahlman, G.~Mildh, S.~Parkvall, J.~Peisa, J.~Sachs, and Y.~Sel{\'e}n, ``5{G}
  radio access,'' \emph{Ericsson review}, vol.~91, no.~6, pp. 42--48, 2014.

\bibitem{Haenggi2012stochBook}
M.~Haenggi, \emph{Stochastic Geometry for Wireless Networks}.\hskip 1em plus
  0.5em minus 0.4em\relax Cambridge, U.K: Cambridge Univ. Press, 2012.

\bibitem{Hamdi2009unified}
Y.~M. Shobowale and K.~A. Hamdi, ``A unified model for interference analysis in
  unlicensed frequency bands,'' \emph{{IEEE} Trans. Wireless Commun.}, vol.~8,
  no.~8, pp. 4004--4013, August 2009.

\bibitem{Heath2015wearable}
K.~Venugopal, M.~C. Valenti, and R.~W. Heath, ``Interference in finite-sized
  highly dense millimeter wave networks,'' in \emph{Inform. Theory and
  Applicat. Workshop}, Feb 2015, pp. 175--180.

\bibitem{Dong2016MobilityAware}
Y.~Dong, Z.~Chen, P.~Fan, and K.~B. Letaief, ``Mobility-aware uplink
  interference model for 5{G} heterogeneous networks,'' \emph{{IEEE} Trans.
  Wireless Commun.}, vol.~15, no.~3, pp. 2231--2244, March 2016.

\bibitem{Junyu2017Dense}
J.~Liu, M.~Sheng, L.~Liu, and J.~Li, ``Interference management in ultra-dense
  networks: Challenges and approaches,'' \emph{IEEE Network}, vol.~PP, no.~99,
  pp. 1--8, Aug. 2017.

\bibitem{Bai2014Blockage}
T.~Bai, R.~Vaze, and R.~W. Heath, ``Analysis of blockage effects on urban
  cellular networks,'' \emph{{IEEE} Trans. Wireless Commun.}, vol.~13, no.~9,
  pp. 5070--5083, Sept. 2014.

\bibitem{gupta2017macro}
A.~K. Gupta, J.~G. Andrews, and R.~W. Heath~Jr, ``Macro diversity in cellular
  networks with random blockages,'' \emph{arXiv preprint arXiv:1701.02044},
  2017.

\bibitem{muller2017analyzing}
M.~K. Muller, M.~Taranetz, and M.~Rupp, ``Analyzing wireless indoor
  communications by blockage models,'' \emph{IEEE Access}, vol.~5, pp.
  2172--2186, 2017.

\bibitem{Venugopal2016blockagefinite}
K.~Venugopal, M.~C. Valenti, and R.~W. Heath, ``Device-to-device millimeter
  wave communications: Interference, coverage, rate, and finite topologies,''
  \emph{{IEEE} Trans. Wireless Commun.}, vol.~15, no.~9, pp. 6175--6188, Sept.
  2016.

\bibitem{thornburg2016performance}
A.~Thornburg, T.~Bai, and R.~W. Heath, ``Performance analysis of outdoor mmwave
  ad hoc networks,'' \emph{IEEE Trans. Signal Process.}, vol.~64, no.~15, pp.
  4065--4079, 2016.

\bibitem{Margarita2016analysis}
M.~Gapeyenko, A.~Samuylov, M.~Gerasimenko, D.~Moltchanov, S.~Singh, E.~Aryafar,
  S.~p.~Yeh, N.~Himayat, S.~Andreev, and Y.~Koucheryavy, ``Analysis of
  human-body blockage in urban millimeter-wave cellular communications,'' in
  \emph{Proc. IEEE Int. Conf. Commun.}, 2016, pp. 1--7.

\bibitem{haenggi2012stochastic}
M.~Haenggi, \emph{Stochastic geometry for wireless networks}.\hskip 1em plus
  0.5em minus 0.4em\relax Cambridge University Press, 2012.

\bibitem{venugopal2016millimeter}
K.~Venugopal and R.~W. Heath, ``Millimeter wave networked wearables in dense
  indoor environments,'' \emph{IEEE Access}, vol.~4, pp. 1205--1221, 2016.

\bibitem{Binomial2017Afshang}
M.~Afshang and H.~S. Dhillon, ``Fundamentals of modeling finite wireless
  networks using binomial point process,'' \emph{{IEEE} Trans. Wireless
  Commun.}, vol.~16, no.~5, pp. 3355--3370, May 2017.

\bibitem{garcia2008spatial}
N.~L. Garcia and T.~G. Kurtz, ``Spatial point processes and the projection
  method,'' in \emph{In and Out of Equilibrium 2}.\hskip 1em plus 0.5em minus
  0.4em\relax Springer, 2008, pp. 271--298.

\bibitem{filipe2015infinite}
J.~Filipe, M.~A.~M. Ferreira \emph{et~al.}, ``Infinite servers queue systems
  busy period-a practical case on logistics problems solving,'' \emph{Appl.
  Math. Sci.}, no.~25, pp. 1221--1228, 2015.

\bibitem{GenerationPartnershipProject2010}
3rd Generation Partnership~Project, ``Further advancements for {E}-{UTRA}
  physical layer aspects,'' Sophia Antipolis Valbonne, France, TR 36.814, Mar.
  2010.

\bibitem{Katsuyuki2016GPP}
K.~Haneda \emph{et~al.}, ``Indoor 5{G} 3{GPP}-like channel models for office
  and shopping mall environments,'' \emph{arXiv preprint arXiv:1603.04079v1},
  2016.

\bibitem{Niknam2017Spatial-Spectral}
S.~Niknam, B.~Natarajan, and H.~Mehrpouyan, ``A spatial-spectral interference
  model for millimeter wave 5{G} applications,'' in \emph{Proc. IEEE Veh. Tech.
  Conf.}, 2017.

\end{thebibliography}

\end{document}